\newcounter{theorem}
\renewcommand\thetheorem{\arabic{section}.\arabic{theorem}}
\newenvironment{lemma}{\par\medskip\noindent\begingroup{\bf Lemma
             \stepcounter{theorem}\thetheorem.}\ \itshape
             \def\@currentlabel{\thetheorem}}{\endgroup\par\medskip}
\newenvironment{theorem}{\par\medskip\noindent\begingroup{\bf Theorem
             \stepcounter{theorem}\thetheorem.}\ \itshape
             \def\@currentlabel{\thetheorem}}{\endgroup\par\medskip}
\newenvironment{remark}{\par\medskip\noindent\begingroup{\bf Remark
             \stepcounter{theorem}\thetheorem.}\
             \def\@currentlabel{\thetheorem}}{\endgroup\par\medskip}
\newenvironment{proof}{\par\noindent{\bf Proof.} }{\proofbox\par\medskip}
\def\proofbox{\hfill{\ensuremath\Box}}
\newdimen\LENB \newdimen\LENW \newdimen\THI
\newdimen\LENWH \newdimen\LENTOT \newcount\N
\def\vbrknlnele#1#2#3{
  \LENB=#1pt \LENW=#2pt \THI=#3pt
  \LENWH=\LENW \divide\LENWH by 2
  \LENTOT=\LENB \advance\LENTOT by \LENW
  \vbox to \LENTOT{
    \vbox to \LENWH{}
    \nointerlineskip
    \vbox to \LENB{\hbox to \THI{\vrule width \THI height \LENB}}
    \nointerlineskip
    \vbox to \LENWH{}
  }}
\def\vbrknln#1{
  \N=#1
  \vcenter{
    \vbox{
      \loop\ifnum\N>0
        \vbox to 4pt{\vbrknlnele{2}{2}{0.1}}
        \nointerlineskip
        \advance\N by -1
      \repeat
  }}}
\def\hbrknlnele#1#2#3{
  \LENB=#1pt \LENW=#2pt \THI=#3pt
  \LENTOT=\LENB \advance\LENTOT by \LENW
  \vcenter{
    \vbox to \THI{
      \hbox to \LENTOT{
        \hfil
        \vrule width \LENB height \THI
        \hfil}
  }}}
\def\journal#1&#2,{\begingroup \let\journal=\dummyjournal
               \it #1\unskip~\bf\ignorespaces #2\rm,\endgroup}
\def\dummyjournal{\errmessage{Reference foul up: nested \journal macros}}
\def\eqref#1{(\ref{#1})}
\begin{document}
\title{Multi-soliton solution to the two-component Hunter-Saxton equation}
\author{Bao-Feng Feng$^{1}\footnote{e-mail: feng@utpa.edu}$, Senyue Lou$^2$
and Ruoxia Yao$^3$}
\address{$^1$~Department of Mathematics,
The University of Texas-Rio Grande Valley,
Edinburg, TX 78541}
\address{$^2$~Faculty of Science, Ningbo University, Ningbo 315211, China}

\address{$^3$~School of Computer Science,
Shaanxi  Normal University, Xi'an 710119, China}
\date{\today}
\begin{abstract}
In this paper, we study the bilinear form and the general $N$-soliton solution for a two-component Hunter-Saxton (2-HS) equation, which is the short wave limit of a two-component Camassa-Holm equation. By defining a hodograph transformation based on a conservation law and appropriate dependent variable transformations, we propose a set of bilinear equations which yields the 2-HS equation. Furthermore, we construct the $N$-soliton solution to the 2-HS equation based on the tau functions of an extended two-dimensional Toda-lattice hierarchy through reductions. One- and two-soliton solutions are calculated and analyzed.
\par
\kern\bigskipamount\noindent
\today
\end{abstract}

Keyword: Hirota's bilinear method; Hodograph transformation; Two-component Hunter-Saxton equation; Short wave limit of a two-component Camassa-Holm equation.

\section{Introduction}
The Hunter-Saxton (HS) equation
\begin{equation}
u_{txx}-2\kappa u_x+2u_xu_{xx}+uu_{xxx} =0,\label{HS}
\end{equation}
was derived as a model for propagation of orientation waves in a massive nematic
liquid crystal director field \cite{HS}. The Lax pair, the bi-Hamiltonian structure and local and global weak solutions were
discussed by Hunter and Zheng \cite{HunterZheng}. It belongs to the member of the hierarchy
of the Harry-Dym equation \cite{Cewen}, and in a series of papers by Alber and his collaborators, the link between billiard solutions and soliton solutions of HS equation were made clear \cite{Alber95,Alber99, Alber01}. The HS equation can be regarded as a short wave limit of the well-known Camassa-Holm equation
\cite{CH_Original,CHHyman}
\begin{equation}
m_t + u m_x+2m u_x=0, \quad m= \kappa+u-u_{xx}\,. \label{CH}
\end{equation}

In the present paper, we are concerned with the two-component Hunter-Saxton (2-HS) equation
\begin{equation}
u_{txx}-2\kappa u_x+2u_xu_{xx}+uu_{xxx} =\sigma \rho
\rho_x,\label{2CH-sweqa}
\end{equation}
\begin{equation}
\rho_t+(\rho u)_x=0,\label{2CH-sweqb}
\end{equation}
where $\kappa$ and $\sigma$ are positive constants. The 2-HS equation has attracted much attention in the past and it has been studied extensively
by many authors  \cite{Wunsch09,Lenells09,Wunsch10,Wunsch11,WunschWu,MoonLiu2012,Moon2013,Kohlmann,ChunhaiLi}. In a series papers by Wunsch, he studies the local and global weak solutions for the periodic 2-HS equation. The single solitary wave solution was studied in \cite{WunschWu,Moon2013,ChunhaiLi}. The local well-posedness and wave-breaking was studied by Moon and Liu \cite{MoonLiu2012}. The geometric property of the 2-HS equation was investigated in \cite{Kohlmann}.

The two-component Hunter Saxton system is a particular case of the Gurevich Zybin system  pertaining to nonlinear one-dimensional dynamics of dark matter as well as nonlinear ion-acoustic waves (cf.\cite{Pavlov} and the references therein). It was also derived
as the $N = 2$ supersymmetric extension of the CH equation \cite{Lenells09}. It is sometimes called the generalized two-component Hunter-Saxton system if $\kappa \ne 0$. We simply call it the two-component Hunter-Saxton equation for either $\kappa =0$ or $\kappa \ne 0$ hereafter.
The 2-HS equation can be viewed as the short wave limit of the two-component Camassa-Holm (2-CH) equation originating in the Green-Naghdi equations which approximate the governing equations for water waves \cite{YoujinLMP,AGZJPA,ConstantinIvanov}
\begin{equation}
m_t + u m_x+2m u_x-\sigma \rho \rho_x=0,  \label{2CHa}
\end{equation}
\begin{equation}
\rho_t+(\rho u)_x=0,  \label{2CHb}
\end{equation}
\begin{equation}
m= \kappa+u-u_{xx},  \label{2CHc}
\end{equation}
by using the scaling $(t,x) \to (\epsilon t, \epsilon^{-1}x)$. To be specific, let
\begin{equation}
\tau = \epsilon t, \quad \xi= \epsilon^{-1} x,
\end{equation}
and expand $u$ and $\rho$ in power series as
\begin{equation}
u = \epsilon^{2} (u_0+\epsilon u_1 + \cdots),  \quad
\rho = \epsilon (\rho_0+\epsilon \rho_1 + \cdots). \label{dep1}
\end{equation}
Then we have
${\partial_t} = \epsilon {\partial_\tau}$, ${\partial_x}= \epsilon^{-1} {\partial_\xi}$.
Substituting these relations  together with (\ref{dep1}) into Eqs. (\ref{2CHa})--(\ref{2CHc}), we obtain the following PDE for $u_0$ and
$\rho_0$ at the lowest order in $\epsilon$.
\begin{equation}
2\kappa u_{0,\xi}-u_{0,\tau \xi \xi}-2u_{0,\xi}u_{0,\xi \xi}-u_0u_{0,\xi
\xi \xi} + \sigma \rho_0 \rho_{0,\xi}=0,  \label{2CH-sworga}
\end{equation}
\begin{equation}
\rho_{0,\tau}+(\rho_0 u_0)_{\xi}=0.  \label{2CH-sworgb}
\end{equation}
Writing back above equations in terms of the original variables $t$, $x$, we
arrive at the two-component Hunter-Saxton equation (\ref{2CH-sweqa})--(\ref{2CH-sweqb}).

Both the two-component Camassa-Holm equation and two-component Hunter-Saxton equation are integrable in the sense that they admit the Lax pair
\begin{eqnarray}
\Psi _{xx} &=&\left( \frac{1}{4}-\lambda m+\lambda ^{2}\rho ^{2}\right) \Psi
,\quad \\
\Psi _{t} &=&-\left( \frac{1}{2\lambda }+u\right) \Psi _{x}+\frac{1}{2}%
u_{x}\Psi \,,
\end{eqnarray}
where $m=\kappa+ u -u_{xx}$ for the 2-CH equation (\ref{2CHa})--(\ref{2CHc}) and $m=\kappa -u_{xx}$ for the 2-HS equation (\ref{2CH-sweqa})--(\ref{2CH-sweqb}). The compatibility condition gives the 2-CH and 2-HS equations, respectively. Therefore, they both have infinite numbers of conservation laws and bi-Hamiltonian structures.


When $\kappa \ne 0$, the 2-HS equation (\ref{2CH-sweqa})--(\ref{2CH-sweqb}) is invariant under the following scaling transformations
\begin{equation}\label{Invariant}
   m \to \epsilon m , \quad u \to u/\epsilon , \quad  \rho \to  \epsilon \rho, \quad \kappa \to   \epsilon \kappa, \quad \sigma \to  \epsilon^2 \sigma, \quad \partial_x \to  \epsilon \partial_x\,.
\end{equation}
Therefore, without loss of generosity, we can fix one of the values for either $\kappa$ or $\sigma$.

In spite of many studies regarding the integrable 2-HS equation, as far as we are aware, its the two- and general $N$-soliton solution are not known yet. Therefore, the goal of the present paper is to find multi-soliton solution to 2-HS equation by using Hirota's bilinear method \cite{Hirotabook}.  The remainder of the present paper is organized as follows. In Section 2, through the hodograph transformation implied by the second equation in 2-HS equation, we propose a set of bilinear equations for 2-HS equation, and present its one- and two-soliton solutions by Hirota's method. In Section 3, Starting from bilinear equations of an extended two-dimensional Toda-lattice hierarchy, together with their Casorati determinant solution, we derive the general multi-soliton solution through a period 2-reduction. A detailed analysis for one- and two-soliton solutions shows that there are either smooth or loop soliton solution depending on the choices of parameters. Section 4 is devoted to some concluding remarks.
\section{Hodograph transformation and bilinear equations for two-component Hunter-Saxton equation}
The second equation (\ref{2CH-sweqb}) of 2-HS equation represents a conservation law, from which we can define a hodograph transformation, or sometimes called a reciprocal transformation
\begin{equation}\label{hodograph}
  dy=\rho dx - \rho u dt\,, \quad ds=dt\,.
\end{equation}
It then follows that
\begin{equation}\label{conversion_relation1}
\partial_x  = \rho \partial_y \,,  \quad \partial_t = \partial_s -\rho u  \partial_y\,,
\end{equation}
or \begin{equation}\label{conversion_relation2}
    \partial_y= \rho^{-1}\partial_x\,, \quad
\partial_s=\partial_t+u\partial_x\,.%
\end{equation}
Using (\ref{conversion_relation1}), Eq. (\ref{2CH-sweqb}) can be rewritten as
\begin{equation}\label{conv_lawysa}
\rho_s + \rho^2 u_y =0\,,
\end{equation}
or, equivalently,
\begin{equation}\label{conv_lawysb}
\left( \frac{1}{\rho} \right)_s = u_y \,.
\end{equation}
In view of the dependent variable transformations for the Camassa-Holm equation and short wave limit of the Camassa-Holm equation \cite{FMO,OMF_JPA}, the following transformation
\begin{equation}\label{trf1}
u=-(\ln g)_{ss}\,, \quad \rho=\frac{g^2}{fh}\,
\end{equation}
is suggested, where $f$, $g$ and $h$ are usually tau functions. Substituting  the transformation (\ref{trf1}) into (\ref{conv_lawysb}) and integrating with respect to $s$, we arrive at the following bilinear equation
\begin{equation}\label{bilinear1}
\left(\frac{1}{2}D_yD_s-1\right)g\cdot g=-fh\,,
\end{equation}
by taking an appropriate integration constant. Here $D$ is called Hirota $D$-operator defined by
\[
D_{s}^{n}D_{y}^{m}f\cdot g=\left( \frac{\partial }{\partial s}-\frac{%
\partial }{\partial s^{\prime }}\right) ^{n}\left( \frac{\partial }{\partial
y}-\frac{\partial }{\partial y^{\prime }}\right) ^{m}f(y,s)g(y^{\prime
},s^{\prime })|_{y=y^{\prime },s=s^{\prime }}\,.
\]%
We remark here that the bilinear equation (\ref{bilinear1}) can be viewed as the one for two-dimensional Toda-lattice, which will be made more
clear in the subsequent section. Note that the first equation of 2HS takes an alternative form
\begin{equation}
m_t + u m_x+2m u_x-\sigma \rho \rho_x=0,  \label{2HS3}
\end{equation}
\begin{equation}
m=2-u_{xx},  \label{2HS4}
\end{equation}
where $\kappa$ is set to be $2$ due to the invariant property (\ref{Invariant}) of the 2-HS equation (\ref{2CH-sweqa})--(\ref{2CH-sweqb}).  An substitution of (\ref{conversion_relation2}) into (\ref{2HS3}) and (\ref{2HS4})  yields
 \begin{equation}
m_s + 2m \rho u_y-\sigma \rho^2 \rho_y=0\,,  \label{2HSys1}
\end{equation}
 \begin{equation}
m=2+\rho(\ln \rho)_{ys}\,,  \label{2HSys2}
\end{equation}
respectively.
Eqs. (\ref{2HSys1})--(\ref{2HSys2}) imply the following two bilinear equations
\begin{eqnarray}
&& \left(D_y D_s+\frac{2}{\sqrt{\sigma}} D_s+ \sqrt{\sigma} D_y\right) f \cdot h=0,,  \\
&&\left(\frac{1}{\sqrt{\sigma}}D_s+1\right)f\cdot h=g^2 \,.
\end{eqnarray}
In summary, we have the following theorem.
\begin{theorem}
A set of bilinear equations
\begin{eqnarray}
&& \left(\frac{1}{2}D_yD_s-1\right)g\cdot g=-fh\,, \label{CHH1} \\
&& \left(\frac{1}{\sqrt{\sigma}}D_s+1\right)f\cdot h=g^2 \,,  \label{CHH2} \\
&& \left(D_y D_s+\frac{2}{\sqrt{\sigma}} D_s+ \sqrt{\sigma} D_y\right) f \cdot h=0\,, \label{CHH3}
\end{eqnarray}
give the two-component Hunter-Saxton equation (\ref{2CH-sweqa})--(\ref{2CH-sweqb}) with $\kappa=2$ through
the dependent transformation
\begin{equation}\label{trf-dependent}
   u = -(\ln g)_{ss}\,, \quad  \rho= \frac{g^2}{fh}\,,
\end{equation}
and the hodograph transformation
\begin{equation}\label{trf-hodograph}
   x=y-(\ln g)_{s}, \quad t=s\,.
\end{equation}
\end{theorem}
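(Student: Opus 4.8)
The plan is to run the derivation that precedes the theorem in reverse. Given tau functions $f,g,h$ solving (\ref{CHH1})--(\ref{CHH3}), define $u$ and $\rho$ by (\ref{trf-dependent}), introduce new independent variables $(y,s)$ by (\ref{trf-hodograph}), and check that $u$ and $\rho$ then satisfy (\ref{2CH-sweqa})--(\ref{2CH-sweqb}) with $\kappa=2$. Throughout write $G=\ln g$, $P=\ln(fh)$, $Q=\ln(f/h)$, and use the elementary Hirota identities
\[ \frac12 D_yD_s\,g\cdot g=g^{2}G_{ys},\qquad D_s f\cdot h=fh\,Q_s,\qquad D_yD_s f\cdot h=fh\bigl(P_{ys}+Q_yQ_s\bigr) \]
which turn each bilinear equation into a relation between logarithmic derivatives.

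\emph{Step 1 (the hodograph transformation and the $\rho$-equation).} Dividing (\ref{CHH1}) by $g^{2}$ gives $G_{ys}=1-fh/g^{2}=1-1/\rho$; call this (A). On one hand, inserting (A) into $dx=d(y-G_s)=(1-G_{ys})\,dy-G_{ss}\,ds$ yields $dx=\rho^{-1}dy+u\,ds$ with $t=s$, i.e. $dy=\rho\,dx-\rho u\,dt$; thus (\ref{trf-hodograph}) is genuinely an integral of the hodograph transformation (\ref{hodograph}) (and $x_y=\rho^{-1}>0$, so it is a legitimate change of variables), whence the conversion formulas (\ref{conversion_relation1})--(\ref{conversion_relation2}) hold. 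On the other hand, differentiating (A) in $s$ and using $u=-G_{ss}$ gives $u_y=(1/\rho)_s$, which is (\ref{conv_lawysb}); rewriting this with (\ref{conversion_relation1}) recovers $\rho_t+(\rho u)_x=0$, i.e. (\ref{2CH-sweqb}).

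\emph{Step 2 (reduction of the remaining equations and a formula for $m$).} The identities above turn (\ref{CHH2}) into $Q_s=\sqrt{\sigma}\,(\rho-1)$, call it (B), and (\ref{CHH3}) into $P_{ys}+Q_yQ_s+\frac{2}{\sqrt{\sigma}}Q_s+\sqrt{\sigma}\,Q_y=0$, call it (C). Set $m:=\kappa-u_{xx}=2-u_{xx}$ (recall $\kappa=2$); a short computation using (\ref{conversion_relation1}) and $u_y=(1/\rho)_s$ from Step~1 gives $u_{xx}=\rho\,\partial_y(\rho u_y)=-\rho(\ln\rho)_{ys}$, hence $m=2+\rho(\ln\rho)_{ys}$, which is (\ref{2HSys2}). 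Since $\ln\rho=2G-P$, substituting (A) for $G_{ys}$ and (C) for $P_{ys}$, and then (B) to eliminate $Q_s$, makes the constant and $\rho^{-1}$ contributions telescope and leaves the compact expression $m=\rho^{2}\bigl(2+\sqrt{\sigma}\,Q_y\bigr)$.

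\emph{Step 3 (the $u$-equation and conclusion).} From (A) one has $u_y=(1/\rho)_s=-\rho_s/\rho^{2}$, and differentiating (B) in $y$ gives $Q_{ys}=\sqrt{\sigma}\,\rho_y$. Differentiating $m=\rho^{2}\bigl(2+\sqrt{\sigma}\,Q_y\bigr)$ in $s$, the terms proportional to $\rho\rho_s$ cancel against $2m\rho u_y$, so $m_s+2m\rho u_y=\rho^{2}\sqrt{\sigma}\,Q_{ys}=\sigma\rho^{2}\rho_y$, i.e. $m_s+2m\rho u_y-\sigma\rho^{2}\rho_y=0$, which is (\ref{2HSys1}). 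Converting (\ref{2HSys1})--(\ref{2HSys2}) back to $(x,t)$ with (\ref{conversion_relation2}) reproduces (\ref{2HS3})--(\ref{2HS4}), the alternative form of (\ref{2CH-sweqa}) with $\kappa=2$ recorded above; together with (\ref{2CH-sweqb}) from Step~1 this is the 2-HS equation, and the proof is complete. I expect no conceptual obstacle: the difficulty is purely in organizing the algebra, since the telescoping that yields $m=\rho^{2}(2+\sqrt{\sigma}\,Q_y)$ and the cancellation in $m_s+2m\rho u_y-\sigma\rho^{2}\rho_y$ each use (A), (B) and (C) exactly once and in a definite order, so the precise numerical constants in (\ref{CHH1})--(\ref{CHH3}) must be tracked carefully; the one structural point to check is the claim in Step~1 that (\ref{trf-hodograph}) is consistent with (\ref{hodograph}) rather than an independent hypothesis.
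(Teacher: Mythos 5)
Your proposal is correct and follows essentially the same route as the paper's own proof: dividing the bilinear equations by $g^{2}$ and $fh$ to get logarithmic-derivative relations, verifying that $x=y-(\ln g)_s$ realizes the hodograph transformation, deriving $m=2+\rho(\ln\rho)_{ys}=\rho^{2}\bigl(2+\sqrt{\sigma}\,(\ln(f/h))_y\bigr)$ (the paper's Eq.~(\ref{CHH7n})), differentiating in $s$, and converting back to $(x,t)$. The only cosmetic difference is that you substitute (A)--(C) directly where the paper forms the combination by "multiply (\ref{CHH3n}) by 2 and subtract," and your consistency check of (\ref{trf-hodograph}) with (\ref{hodograph}) is exactly the paper's verification of $\partial x/\partial s=u$, $\partial x/\partial y=\rho^{-1}$.
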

\begin{proof}
Dividing both sides of Eq.(\ref{CHH1}) by $g^2$, we have
\begin{equation}
(\ln g)_{ys}-1=-\frac{fh}{g^2}\,.  \label{CHH3n}
\end{equation}
Dividing both sides of Eqs. (\ref{CHH2}) and (\ref{CHH3}) by $fh$, we arrive at
\begin{equation}
\frac{1}{\sqrt{\sigma}}\left(\ln\frac{f}{h}\right)_s+1=\frac{g^2}{fh}\,,  \label{CHH2n}
\end{equation}
and
\begin{equation}
(\ln fh)_{ys} +\left(\left(\ln\frac{f}{h}\right)_s+\sqrt{\sigma}\right)\left(\left(\ln\frac{f}{h}\right)_y+\frac{2}{\sqrt{\sigma}}\right)-2=0\,,
\label{CHH1n}
\end{equation}
respectively.
Multiplying Eq.(\ref{CHH3n}) by 2, then subtracting it from Eq. (\ref{CHH1n}), one yields
\begin{equation}
\left(\ln\frac{fh}{g^2}\right)_{ys}+\left(\left(\ln\frac{f}{h}\right)_s+\sqrt{\sigma}\right)\left(\left(\ln\frac{f}{h}\right)_y+
\frac{2}{\sqrt{\sigma}}\right)
=2\frac{fh}{g^2}\,.  \label{CHH4n}
\end{equation}
By referring to Eq. (\ref{CHH2n}), it can be easily checked that
\[
\frac{\partial x}{\partial s} = -(\ln g)_{ss}= u, \quad \frac{\partial x}{\partial y} = 1-
(\ln g)_{ys}= \rho^{-1}\,
\]
which realizes the hodograph transformation (\ref{hodograph}) defined previously.
Note that Eq. (\ref{CHH3n}) can be written as
\begin{equation}  \label{CHH5n}
\left(\frac{1}{\rho} \right)_s = - (\ln g)_{yss} =  u_y\,,
\end{equation}
or equivalently
\begin{equation}  \label{CHH6n}
(\ln \rho)_s = -{\rho} u_y = - u_x \,.
\end{equation}
Combining Eq. (\ref{CHH4n}) with Eq. (\ref{CHH2n}) , we have
\begin{equation}  \label{CHH7n}
2+{\rho} \left( \ln \rho \right)_{ys} = \sqrt{\sigma} \rho^2 \left(\left(\ln\frac{f}{h}\right)_y+\frac{%
2}{\sqrt{\sigma}}\right) \,.
\end{equation}
Let us define
\begin{equation}\label{CHH8n}
m = 2+ {\rho} \left( \ln \rho \right)_{ys}\,,
\end{equation}
then it immediately follows
\begin{equation}\label{CHH9n}
m = 2-u_{xx}\,,
\end{equation}
by using Eq. (\ref{CHH6n}) and the conversion relation (\ref{conversion_relation2}).
Differentiating Eq. (\ref{CHH7n}) with respect to $s$,  we obtain
\begin{eqnarray}
m_s &=& 2\sqrt{\sigma} \rho \rho_s \left(\left(\ln\frac{f}{h}\right)_y+\frac{2}{\sqrt{\sigma}}\right) + \sqrt{\sigma} \rho^2
\left(\ln \frac{f}{h}\right)_{ys} \nonumber \\
 &=& 2\rho_s \frac{2+ {\rho} \left( \ln \rho \right)_{ys}}{\rho} + \sqrt{\sigma} \rho^2
 \left(\ln\frac{f}{h}\right)_{ys} \nonumber \\
&=& 2m (\ln \rho)_s + \sigma \rho^2 \rho_y \nonumber \\
&=& -2m u_x + \sigma  \rho \rho_x\,.  \label{CHH10n}
\end{eqnarray}
Here equations (\ref{CHH2n}), (\ref{CHH6n}) and (\ref{CHH7n}) are used. \\
Finally the hodograph transformation (\ref{conversion_relation2}) converts Eqs. (\ref{CHH6n}) and (\ref{CHH10n}) into
\begin{equation}  \label{2HS1}
(\partial_t+ u \partial_x) \rho =- \rho u_x \,,
\end{equation}
and
\begin{equation}  \label{HSw2}
\left( \partial_t + u \partial_x \right) m = -2m u_x + \sigma \rho \rho_x \,,
\end{equation}
respectively. The above two equations are nothing but the 2-HS equation with $m=2-u_{xx}$.
\end{proof}
\begin{remark}
Substituting (\ref{CHH2}) into (\ref{CHH2}), one obtains
\begin{equation}\label{CHH4}
    \left(D_y D_s+ \sqrt{\sigma} D_y-2 \right) f \cdot h=2g^2\,.
\end{equation}
If we assume $f = h$ as $\sigma \to 0$, then Eq. (\ref{CHH4}) converges to
\begin{equation}\label{CHH5}
    \left(\frac 12 D_y D_s-1 \right) f \cdot f=2g^2\,.
\end{equation}
It is worth pointing out that Eqs. (\ref{CHH3}) and (\ref{CHH4}) are actually the bilinear equations for period 2-reduction of two-dimensional Toda-lattice, which yield the sine-Gordon equation and further the short-pulse equation and the short wave limit of the CH equation.
 \end{remark}

In the last, we proceed to finding the one- and two-soliton solutions for 2-HS equation (\ref{2CH-sweqa})--(\ref{2CH-sweqb}) by Hirota's perturbation method \cite{Hirotabook}.
\par {\bf One-soliton solution:} To find one-soliton solution, we assume
\begin{equation}\label{1soliton}
  g=1+e^{ky+\omega s}\,, \quad f=1+ c e^{ky+\omega s}\,, \quad  h=1+ d e^{ky+\omega s}\,,
\end{equation}
 and substitute them into the bilinear equations (\ref{CHH3})--(\ref{CHH2}). As a result, we obtain the following algebraic relations
\begin{eqnarray}
 && \omega k  = 2-(c+d)\,, \quad  cd = 1\,,\nonumber \\
 && \frac{\omega}{\sqrt{\sigma}} (c-d)= 2-(c+d)\,,\nonumber \\
  && \omega k +\sqrt{\sigma} k (c-d) = 2(c-d)-4 \nonumber\,.
\end{eqnarray}
By choosing $c$ as a free parameter, we yield the one-soliton solution determined by
\begin{equation*}
  k  = \frac{1}{\sqrt{\sigma}}\frac{c^2-1}{c}\,, \quad \omega = \sqrt{\sigma} \frac{1-c}{1+c} \,, \quad d=\frac{1}{c}\,.
\end{equation*}
The tau functions found here leads to the one-soliton solution of the form
\begin{equation}
u=-\frac{\sigma}{4} \left( \frac{1-c}{1+c}\right)^2 \mbox{sech}^2\left( \frac{k}{2}(y-vs)\right) \,,
\label{2HS1solitona}
\end{equation}

\begin{equation}
\rho=\left( 1 +  \frac{(1-c)^2}{4c}  \mbox{sech}^2\left( \frac{k}{2}(y-vs)\right) \right)^{-1}\,,  \label{2HS1solitonb}
\end{equation}

\begin{equation}
x=y + \sqrt{\sigma} \frac{1-c}{1+c} \left( \frac{1}{1+e^{k(y-v s)}}-1 \right)\,, \quad t=s\,.
\label{2HS1solitonc}
\end{equation}
The amplitude of the soliton is $\sigma(1-c)^2/(4(1+c)^2)$, and the velocity in the $(y,s)$ plane is $v=\sigma c/(1+c)^2$.
From Eqs. (\ref{2HS1solitona})--\ref{2HS1solitona}), it is obvious that the conditions of $c>0$ and $c \ne 1$ must be imposed in order to assure a non-singular soliton solution. It is interesting to note that there is an upper bound for the amplitude of solitons, which is $\sigma/4$. When $c\to 0$ or $c \to \infty$, the amplitude of the soliton approaches this upper bound, meanwhile, the velocity approaches zero. When $c \to 1$, the amplitude approaches zero, whereas, the velocity approaches to the maximum of $\sigma/4$.
An example with $c=7$ and $\sigma=4$ is illustrated in Fig. 1.

\begin{figure}[htbp]
\centerline{
\includegraphics[scale=0.35]{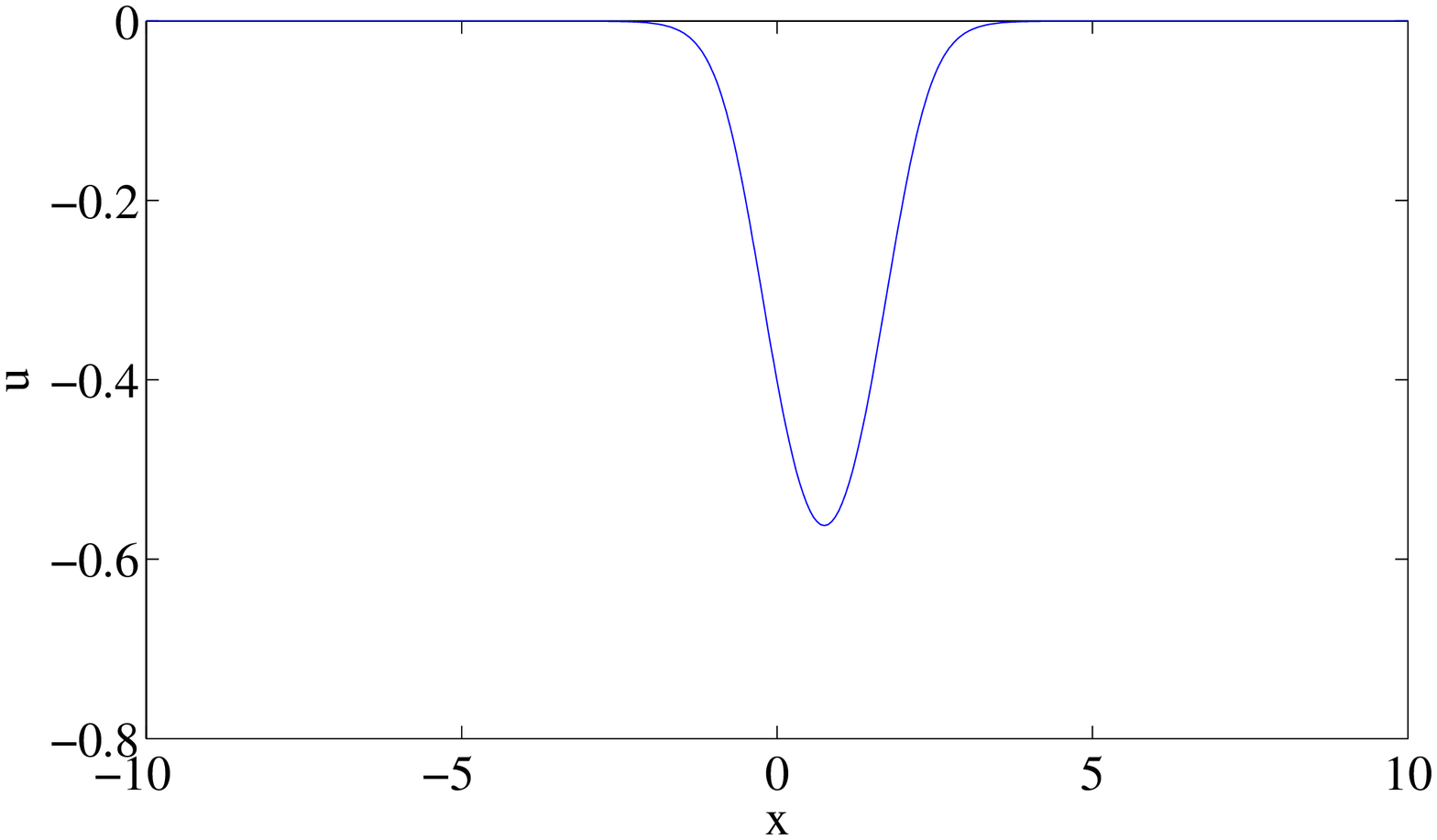}\quad
\includegraphics[scale=0.35]{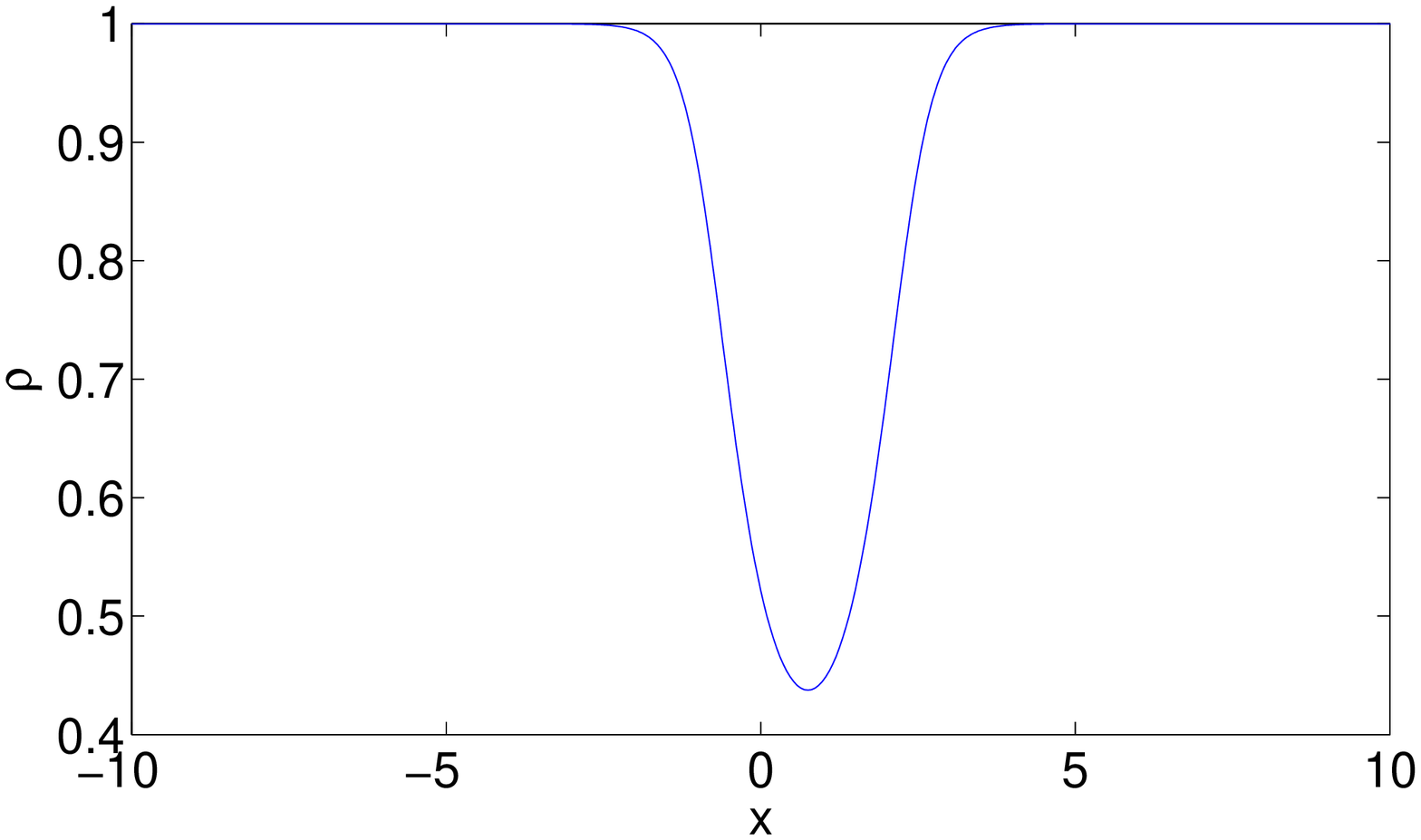}} \kern-0.25\textwidth
\hbox to
\textwidth{\hss(a)\kern17.em\hss(b)\kern5.3em} \kern+0.275\textwidth
\caption{An example of soliton solution for 2-HS equation: (a) $x$-$u$ profile; (b) $x$-$\rho$ profile.}
\label{fig1}
\end{figure}

\par {\bf Two-soliton solution:} Based on the obtained one-soliton solution, we assume the tau functions of two-soliton solution are of the form
  \begin{eqnarray}
 && g=1+e^{k_1y+\omega_1 s}+e^{k_2y+\omega_2 s}+ a_{12} e^{(k_1+k_2)y+(\omega_1+\omega_2) s}\,,\label{taug} \\
 && f=1+c_1e^{k_1y+\omega_1 s}+c_2 e^{k_2y+\omega_2 s}+ c_{12} e^{(k_1+k_2)y+(\omega_1+\omega_2) s}\,,\label{tauf} \\
  && h=1+c^{-1}_1e^{k_1y+\omega_1 s}+c^{-2}_2 e^{k_2y+\omega_2 s}+ d_{12} e^{(k_1+k_2)y+(\omega_1+\omega_2) s}\label{tauh}\,,
\end{eqnarray}
where
\begin{equation*}
  k_i  = \frac{c_i^2-1}{\sqrt{\sigma}c_i}\,, \quad \omega = \frac{1-c_i}{1+c_i}  \sqrt{\sigma}\,, \quad (i=1,2)\,.
\end{equation*}
After some tedious calculation, we obtain two-soliton solution determined by
\begin{equation*}
  a_{12}=\left( \frac{c_1-c_2}{1-c_1c_2}\right)^2\,, \quad c_{12}=c_1c_2 a_{12}\,, \quad d_{12}=c^{-1}_1c^{-1}_2 a_{12}\,.
\end{equation*}
We illustrate two-soliton solution interaction in Figs. 2-4. As shown in Fig.2, initially, we choose one soliton with $c=2.0$ located on the left, and one soliton with $c=7.0$ located on the right. They undertake a collision later on. The collision is elastic without any change of shape but a phase shift after the collision which is shown in Fig. 2-4 for $t=0,30,60$, respectively.

\begin{figure}[htbp]
\centerline{
\includegraphics[scale=0.35]{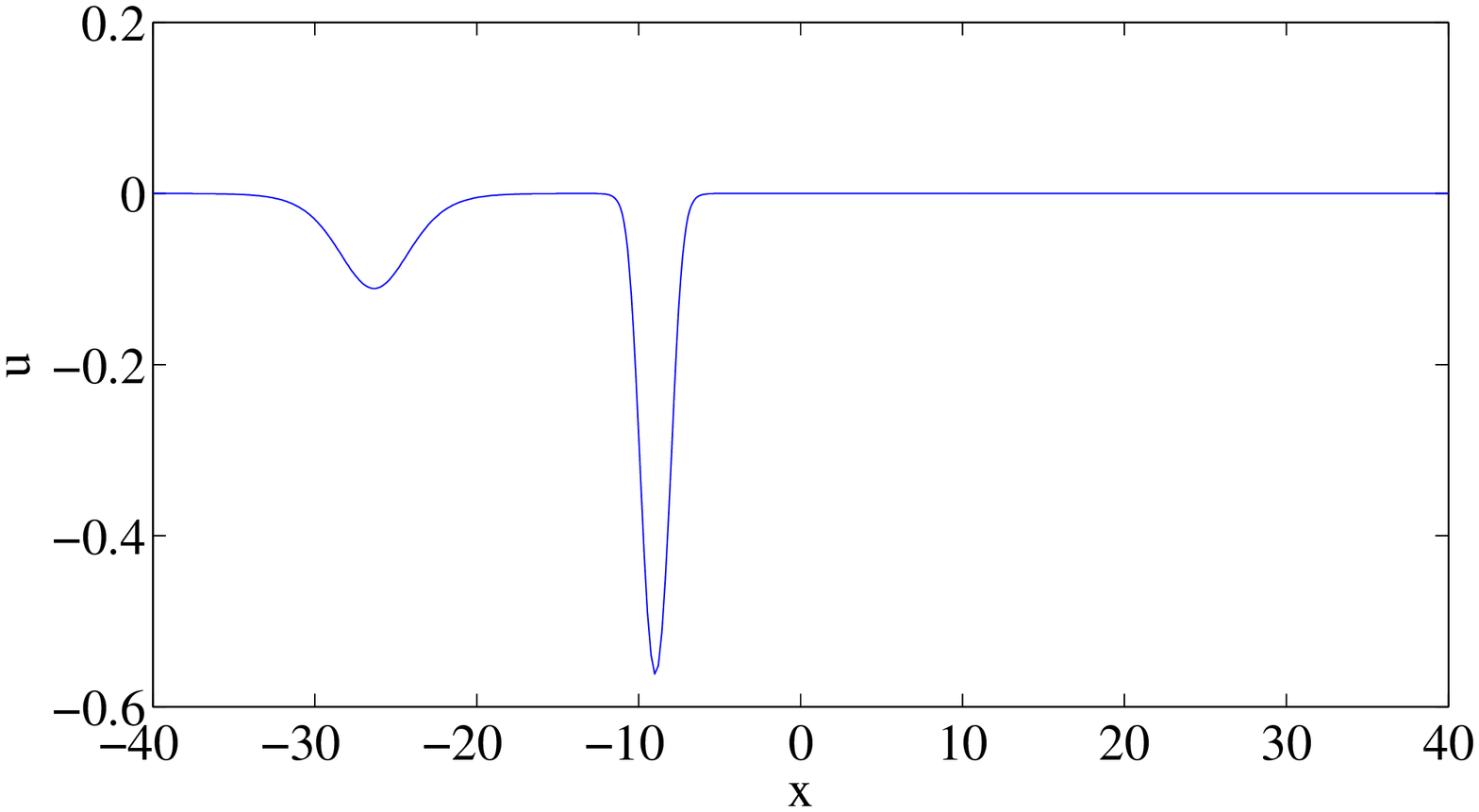}\quad
\includegraphics[scale=0.35]{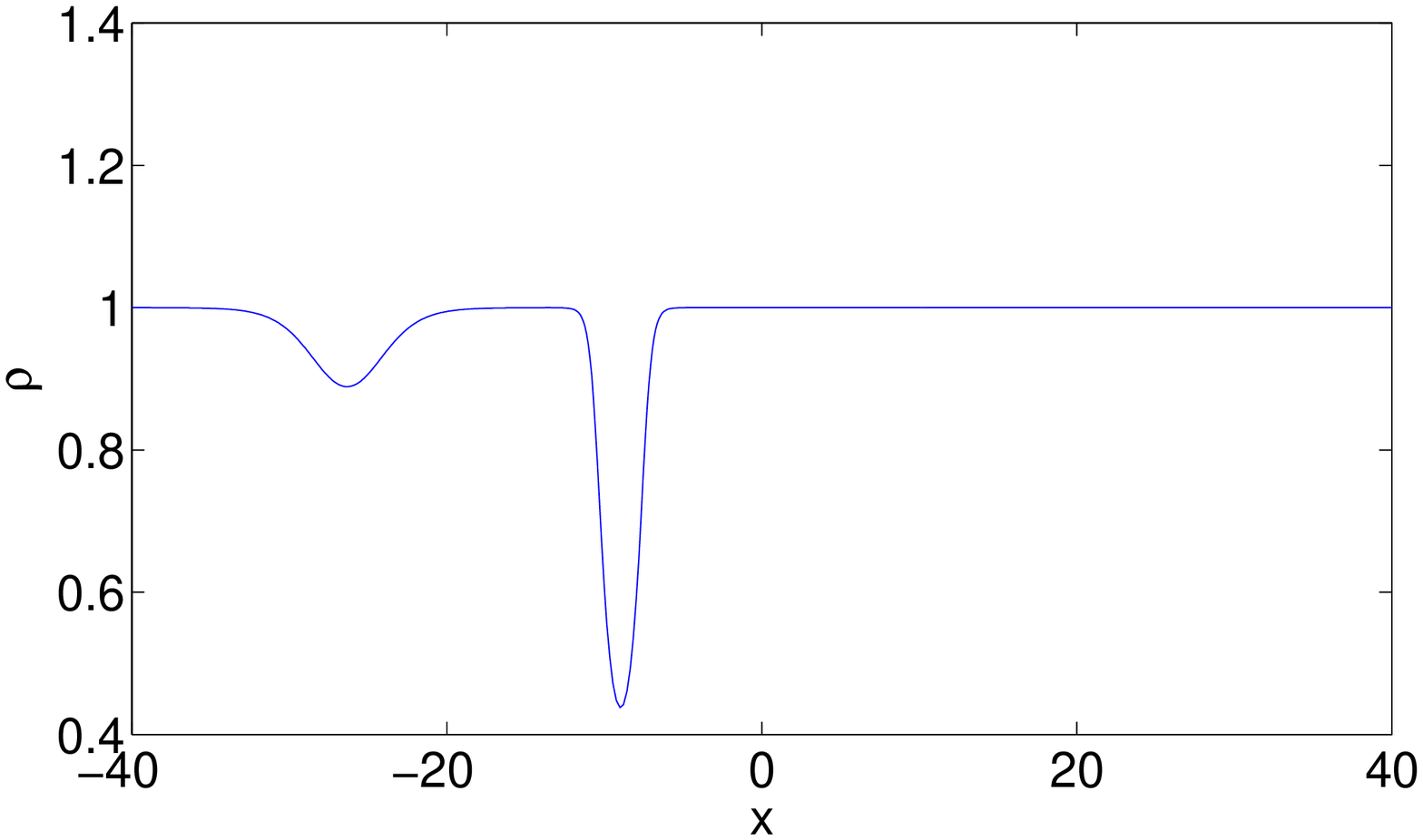}} \kern-0.25\textwidth
\hbox to
\textwidth{\hss(a)\kern17.em\hss(b)\kern5.3em} \kern+0.275\textwidth
\caption{Two-soliton interaction at $t=0$: $x$-$u$ profile; (b) $x$-$\rho$ profile.}
\label{fig2}
\end{figure}

\begin{figure}[htbp]
\centerline{
\includegraphics[scale=0.35]{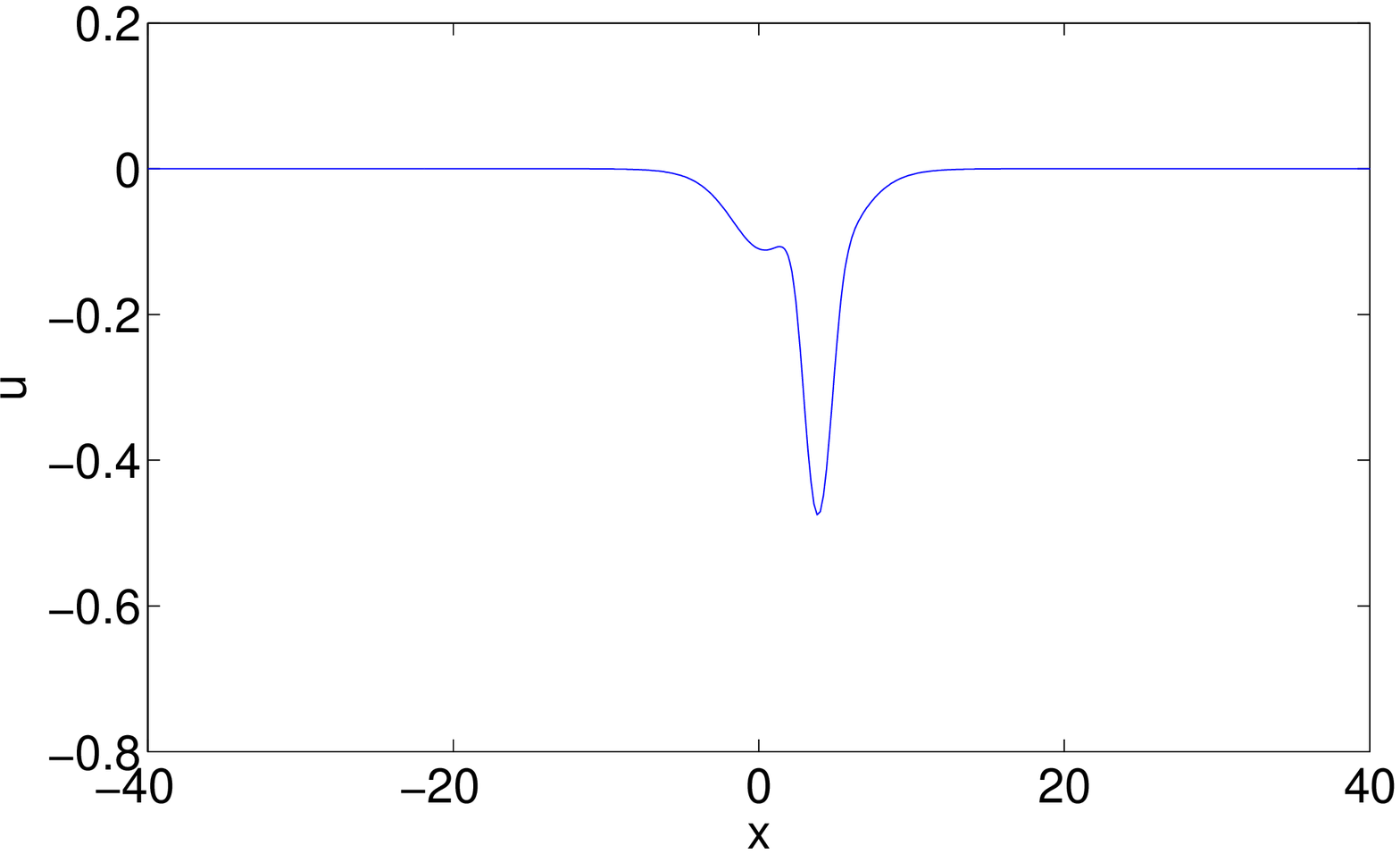}\quad
\includegraphics[scale=0.35]{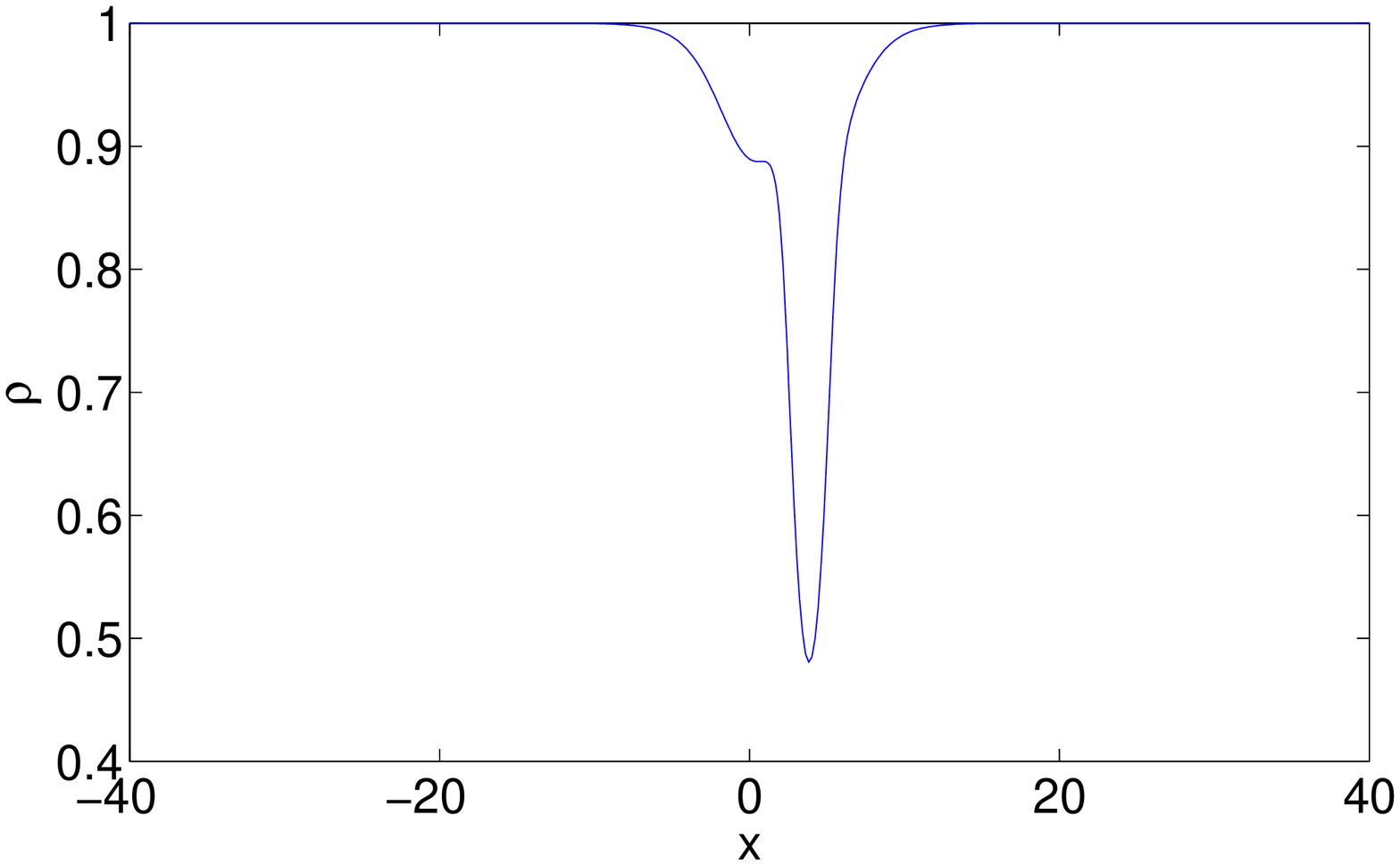}} \kern-0.25\textwidth
\hbox to
\textwidth{\hss(a)\kern17.em\hss(b)\kern5.3em} \kern+0.275\textwidth
\caption{Two-soliton interaction at $t=30$: $x$-$u$ profile; (b) $x$-$\rho$ profile.}
\label{fig3}
\end{figure}

\begin{figure}[htbp]
\centerline{
\includegraphics[scale=0.35]{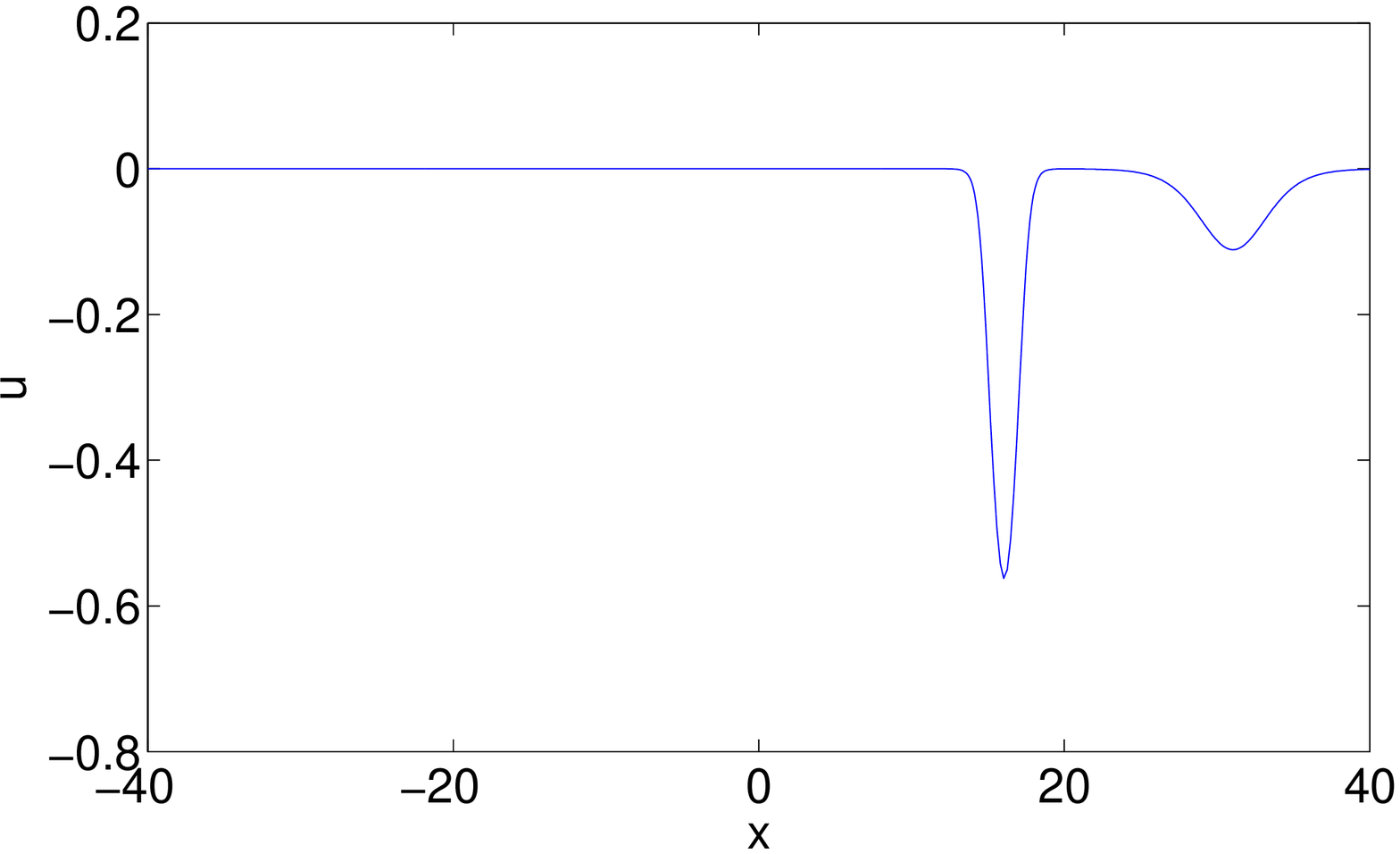}\quad
\includegraphics[scale=0.35]{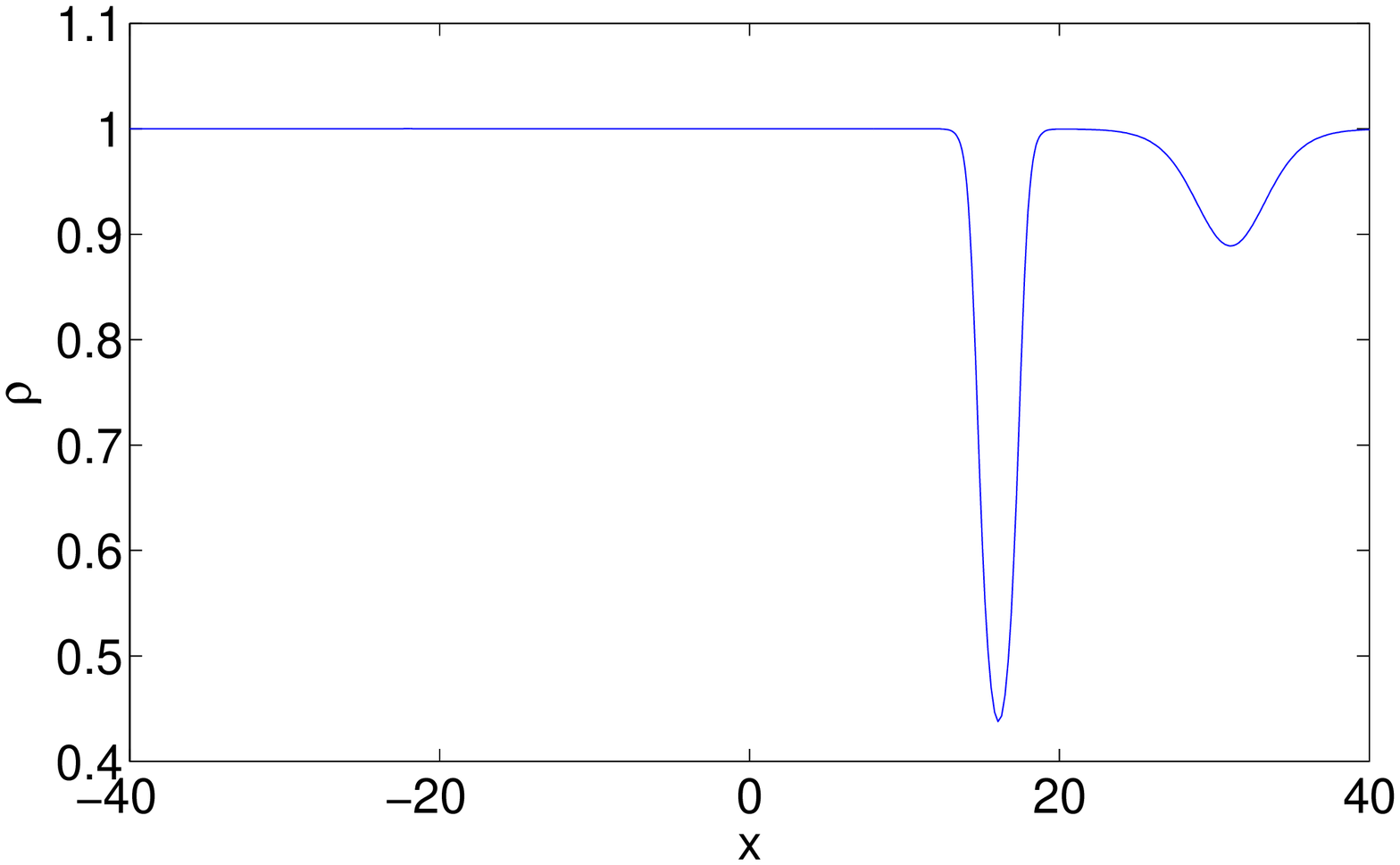}} \kern-0.25\textwidth
\hbox to
\textwidth{\hss(a)\kern17.em\hss(b)\kern5.3em} \kern+0.275\textwidth
\caption{Two-soliton interaction at $t=60$: $x$-$u$ profile; (b) $x$-$\rho$ profile.}
\label{fig4}
\end{figure}

\section{Multi-soliton solution to the two-component Hunter-Saxton equation}
We start with a Casorati determinant solution for the extended (deformed)  two-dimensional Toda hierarchy
\begin{equation}\label{tau}
  \tau_n=\left|\matrix{
\psi_1^{(n)} &\psi_1^{(n+1)} &\cdots &\psi_1^{(n+N-1)} \cr
\psi_2^{(n)} &\psi_2^{(n+1)} &\cdots &\psi_2^{(n+N-1)} \cr
\vdots       &\vdots         &       &\vdots           \cr
\psi_N^{(n)} &\psi_N^{(n+1)} &\cdots &\psi_N^{(n+N-1)}}\right|\,,
\end{equation}
where
\begin{equation}
\psi_i^{(n)}=a_{i,1}(p^{-1}_i-b)^ne^{\xi_i}+a_{i,2}(q^{-1}_i-b)^n e^{\eta_i}\,,
\end{equation} with
\begin{equation}
\xi_i=\frac{1}{p^{-1}_i-b}x_1+\frac{1}{(p^{-1}_i-b)^2}x_2+p^{-1}_i x_{-1} + p^{-2}_i x_{-2}+\xi_{i0}\,,
\end{equation}
\begin{equation} \label{taub}
\eta_i=\frac{1}{q^{-1}_i-b}x_1+\frac{1}{(q^{-1}_i-b)^2}x_2+q^{-1}_i x_{-1} + q^{-2}_i x_{-2}+\eta_{i0}\,.
\end{equation}
The following lemma gives bilinear equations satisfied by the above tau-functions presented.
\begin{lemma}
The tau functions $\tau_n$ given by (\ref{tau})--(\ref{taub}) satisfy the bilinear equations
\begin{equation}
\left( \frac{1}{2}D_{x_{1}}D_{x_{-1}}-1\right) \tau _{n}\cdot
\tau _{n}=-\tau _{n-1}\tau _{n+1}\,,
\label{2dbilinear1}
\end{equation}

\begin{equation}
\frac{1}{2}D_{x_{1}}(D_{x_{-2}}-2bD_{x_{-1}})\tau _{n}\cdot
\tau _{n}=-D_{x_{-1}}\tau _{n+1}\cdot \tau _{n-1},
\label{2dbilinear2}
\end{equation}

\begin{equation}
\frac{1}{2}D_{x_{2}}D_{x_{-1}}\tau _{n}\cdot \tau
_{n}=D_{x_{1}}\tau _{n+1}\cdot \tau _{n-1}\,,
\label{2dbilinear3}
\end{equation}

\begin{equation}
\left( \frac{1}{2}D_{x_{2}}(D_{x_{-2}}-2bD_{x_{-1}})-2\right)
\tau _{n}\cdot \tau _{n}=(D_{x_{1}}D_{x_{-1}}-2)\tau _{n+1}\cdot \tau
_{n-1}\,.
\label{2dbilinear4}
\end{equation}
\end{lemma}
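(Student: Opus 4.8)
The plan is to deduce all four identities from one source --- the column vectors of the Casoratian \eqref{tau} obey linear dispersion relations in the flow variables --- after which each identity follows from a Plücker relation (equivalently Jacobi's identity for determinants) together with some elementary bookkeeping.

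\textbf{Step 1: dispersion relations.} First I would set $P_i=p_i^{-1}-b$, $Q_i=q_i^{-1}-b$, so that $\psi_i^{(n)}=a_{i,1}P_i^{\,n}e^{\xi_i}+a_{i,2}Q_i^{\,n}e^{\eta_i}$ with $\xi_i=x_1/P_i+x_2/P_i^{2}+(P_i+b)x_{-1}+(P_i+b)^{2}x_{-2}+\xi_{i0}$ and likewise for $\eta_i$. Termwise differentiation then gives
\[
\partial_{x_1}\psi_i^{(n)}=\psi_i^{(n-1)},\qquad \partial_{x_2}\psi_i^{(n)}=\psi_i^{(n-2)}=\partial_{x_1}^{2}\psi_i^{(n)},
\]
\[
(\partial_{x_{-1}}-b)\psi_i^{(n)}=\psi_i^{(n+1)},\qquad (\partial_{x_{-2}}-2b\,\partial_{x_{-1}}+b^{2})\psi_i^{(n)}=\psi_i^{(n+2)}.
\]
Thus $\partial_{x_1},\partial_{x_2}$ lower the Toda index by one and two, while the \emph{deformed} operators $\partial_{x_{-1}}-b$ and $\partial_{x_{-2}}-2b\,\partial_{x_{-1}}+b^{2}$ raise it by one and two; this is precisely why the combination $D_{x_{-2}}-2bD_{x_{-1}}$ appears in \eqref{2dbilinear2} and \eqref{2dbilinear4}. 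It is also worth noting that $e^{\,bx_{-1}+b^{2}x_{-2}}$ is a common factor of every $\psi_i^{(n)}$, so that after the change of times $\widetilde x_{-1}=x_{-1}+2bx_{-2}$, $\widetilde x_{-2}=x_{-2}$ one has $\tau_n=e^{\,N(bx_{-1}+b^{2}x_{-2})}\widetilde\tau_n$, where $\widetilde\tau_n$ is a standard two-dimensional Toda tau function in times $(x_1,x_2,\widetilde x_{-1},\widetilde x_{-2})$; because the prefactor is $n$-independent and linear in the times it drops out of every Hirota bilinear form in \eqref{2dbilinear1}--\eqref{2dbilinear4}, already reducing the lemma to the classical bilinear hierarchy of the two-dimensional Toda lattice. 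I would nonetheless also give the direct determinant argument.

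\textbf{Step 2: columnwise differentiation.} Writing $\tau_n=\det(\mathbf c_n,\mathbf c_{n+1},\dots,\mathbf c_{n+N-1})$ with $\mathbf c_k=(\psi_1^{(k)},\dots,\psi_N^{(k)})^{\mathrm T}$, and using Step 1 together with the vanishing of a determinant with a repeated column, I obtain $\partial_{x_1}\tau_n=\det(\mathbf c_{n-1},\mathbf c_{n+1},\dots,\mathbf c_{n+N-1})$, $\partial_{x_{-1}}\tau_n=\det(\mathbf c_n,\dots,\mathbf c_{n+N-2},\mathbf c_{n+N})+Nb\,\tau_n$, and the mixed derivatives $\partial_{x_1}\partial_{x_{-1}}\tau_n$, $\partial_{x_2}\partial_{x_{-1}}\tau_n$, $\partial_{x_1}(\partial_{x_{-2}}-2b\partial_{x_{-1}})\tau_n$, $\partial_{x_2}(\partial_{x_{-2}}-2b\partial_{x_{-1}})\tau_n$ as short sums of determinants in which one or two columns have been moved out of the window $\{n,\dots,n+N-1\}$ to indices in $\{n-2,n-1,n+N,n+N+1\}$. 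Together with $\tau_{n\pm1}$ and $\tau_n$ itself these are all $N\times N$ minors of the single $N\times(N+4)$ array $(\mathbf c_{n-2},\mathbf c_{n-1},\mathbf c_n,\dots,\mathbf c_{n+N-1},\mathbf c_{n+N},\mathbf c_{n+N+1})$.

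\textbf{Step 3: Plücker relations.} Each of \eqref{2dbilinear1}--\eqref{2dbilinear4} now becomes an algebraic identity among these minors. For \eqref{2dbilinear1} it is the classical computation that a Casoratian solves the two-dimensional Toda bilinear equation, following from one three-term Plücker relation (Jacobi's identity on the bordered determinant with columns $\mathbf c_{n-1},\dots,\mathbf c_{n+N-1}$); this part is short. For \eqref{2dbilinear2}--\eqref{2dbilinear4} I would use the same device --- express each Hirota product $D^{\alpha}\tau_{n+1}\cdot\tau_{n-1}$ and $D^{\alpha}\tau_n\cdot\tau_n$ occurring there as one $(N+1)\times(N+1)$ or $(N+2)\times(N+2)$ determinant built from the columns above plus one or two extra rows, and reduce the claimed identity to an obvious column dependence --- or, alternatively, simply invoke the reduction of Step 1. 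The main obstacle will be the bookkeeping here: tracking the several determinants generated by $\partial_{x_2}=\partial_{x_1}^{2}$ and by the second negative flow, and checking that the $b$-dependent contributions cancel in precisely the Hirota combinations written in the lemma. This is routine but needs to be organised so that exactly the right Plücker identity appears; the underlying determinant identities are the standard ones, for which I would refer to Hirota's book \cite{Hirotabook}.
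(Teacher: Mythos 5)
Your proposal follows essentially the same route as the paper: the dispersion relations (including the deformed shift operators $\partial_{x_{-1}}-b$ and $\partial_{x_{-2}}-2b\,\partial_{x_{-1}}+b^{2}$), the column-derivative formulas for the Casoratian, and the reduction of each bilinear equation to (combinations of) three-term Pl\"ucker identities, which is exactly how the paper proves \eqref{2dbilinear1}--\eqref{2dbilinear4}; the bookkeeping you defer is precisely what the paper writes out, namely one Pl\"ucker identity for \eqref{2dbilinear1}, differences of two for \eqref{2dbilinear2} and \eqref{2dbilinear3}, and a combination of four for \eqref{2dbilinear4}. Your extra remark that the time shift $\widetilde x_{-1}=x_{-1}+2bx_{-2}$ together with the gauge factor $e^{N(bx_{-1}+b^{2}x_{-2})}$ maps $D_{x_{-2}}-2bD_{x_{-1}}$ to $D_{\widetilde x_{-2}}$ and reduces the lemma to the undeformed ($b=0$) two-dimensional Toda bilinear hierarchy is correct and a useful shortcut, though by itself it only relocates the same Casoratian/Pl\"ucker computation rather than replacing it.
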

\begin{proof}
The proof will be given by the technique developed by Hirota and Ohta \cite{Hirotabook,OhtaRIMS1989}.
Firstly, we can easily check that the following relations
\begin{eqnarray}
&&
\partial_{x_1}\psi_i^{(n)} = \psi_i^{(n-1)}\,,
\label{dispersion1}\\
&&
\partial_{x_2}\psi_i^{(n)} = \psi_i^{(n-2)}\,,
\label{dispersion2} \\
&&\partial_{x_{-1}} \psi_i^{(n)} = \psi_i^{(n+1)}+b\psi_i^{(n)}\,,
\label{dispersion3}\\
&&\partial_{x_{-2}} \psi_i^{(n)} =\psi_i^{(n+2)}+2b\psi_i^{(n+1)}+b^2\psi_i^{(n)}\,.
\label{dispersion4}
\end{eqnarray}
For simplicity, we introduce a convenient notation,
\begin{eqnarray}
 |{n_1},
  {n_2},
  \cdots,
  {n_N}|
 = \left|\matrix{
  \psi_1^{(n_1)} &\psi_1^{(n_2)}&\cdots
   &\psi_1^{(n_N)}\cr
  \psi_2^{(n_1)} &\psi_2^{(n_2)}&\cdots
   &\psi_2^{(n_N)} \cr
  \vdots                     &\vdots                     &
   &\vdots                     \cr
  \psi_N^{(n_1)} &\psi_N^{(n_2)} &\cdots
   &\psi_N^{(n_N)} \cr}
 \right|\,,
\end{eqnarray}
by which $\tau_n$ is rewritten as
\begin{eqnarray}
 \tau_n=|n, n+1,\cdots,n+N-1|\,.\label{n-tau}
\end{eqnarray}
Based on above relations (\ref{dispersion1})--(\ref{dispersion4}), we have the following relations regarding the derivatives of $\tau_n$
\begin{eqnarray}
&& \partial_{x_1}\tau_n=|n-1,n+1,\cdots,n+N-1|\,, \label{x1-dif}
\end{eqnarray}
\begin{eqnarray}
&& (\partial_{x_{-1}}-Nb)\tau_n=|n,n+1,\cdots,n+N-2,n+N|\,, \label{x-1-dif}
\end{eqnarray}
\begin{eqnarray}
&& (\partial_{x_{1}}(\partial_{x_{-1}}-Nb)-1)\tau_n=|n-1,n+1,\cdots,n+N-2,n+N|
\,. \label{xt-dif}
\end{eqnarray}
\begin{eqnarray}
&& (\partial_{x_{-2}}-2b\partial_{x_{-1}} +Nb^2)\tau_n=|n,n+1,\cdots,n+N-2,n+N+1|
\nonumber\\
&&  \quad -|n,n+1,\cdots,n+N-3,n+N-1,n+N|\,, \label{y-dif}
\end{eqnarray}
\begin{eqnarray}
&&  \partial_{x_{1}} (\partial_{x_{-2}}-2b\partial_{x_{-1}} +Nb^2)\tau_n
=|n-1,n+1,\cdots,n+N-2,n+N+1|\nonumber\\
&&  \quad -|n-1,n+1,\cdots,n+N-3,n+N-1,n+N|\,. \label{yt-dif}
\end{eqnarray}
\begin{eqnarray}
&& \partial_{x_{2}}(\partial_{x_{-1}}-Nb)\tau_n=|n-2,n+1,\cdots,n+N-2,n+N|\nonumber\\
&& \quad -|n-1,n,n+2,\cdots,n+N-2,n+N|\,, \label{xs-dif}
\end{eqnarray}
\begin{eqnarray}
&& \partial_{x_{2}}(\partial_{x_{-2}}-2b\partial_{x_{-1}} +Nb^2)\tau_n
=|n-2,n+1,\cdots,n+N-2,n+N+1|\nonumber\\
&& \quad -|n-1,n,n+2,\cdots,n+N-2,n+N+1|\nonumber\\
&& \quad -|n-2,n+1,\cdots,n+N-3,n+N-1,n+N|\nonumber\\
&& \quad +|n-1,n,n+2,\cdots,n+N-3,n+N-1,n+N|\,. \label{ys-dif}
\end{eqnarray}
Then the Pl\"ucker identity for determinants
\begin{eqnarray}
\fl &|n-1, n+1,\cdots,n+N-2,n+N|\times
     |n,n+1,\cdots,n+N-2,n+N-1| \cr
\fl  - &|n, n+1,\cdots,n+N-2,n+N|\times
     |n-1,n+1,\cdots,n+N-2,n+N-1| \cr
\fl  + &|n+1,\cdots,n+N-2,n+N-1,n+N|\times
     |n-1,n,n+1,\cdots,n+N-2|
= 0,&
\end{eqnarray}
gives
$$
(\partial_{x_1}(\partial_{x_{-1}}-Nb)-1)\tau_n\times\tau_n
-(\partial_{x_{-1}}-Nb)\tau_n\times\partial_{x_1}\tau_n+\tau_{n+1}\tau_{n-1}= 0\,,
$$
i.e.,
$$
\left(\partial_{x_1} \partial_{x_{-1}} \tau_n-\tau_n
\right) \tau_n
-\partial_{x_1} \tau_n \partial_{x_{-1}} \tau_n
 +\tau_{n+1}\tau_{n-1}
= 0\,,
$$
which is exactly the first bilinear equation (\ref{2dbilinear1}). Next, let us prove the second bilinear equation.
A subtraction of the following two Pl\"ucker determinant identities
\begin{eqnarray}
&& \fl \quad |n-1, n+1,\cdots,n+N-3,n+N-1,n+N|\times
     |n,n+1,\cdots,n+N-2,n+N-1| \nonumber\\
&& \fl  - |n,n+1,\cdots,n+N-3,n+N-1,n+N|\times
     |n-1,n+1,\cdots,n+N-2,n+N-1| \nonumber\\
&& \fl  + |n-1,n,n+1,\cdots,n+N-3,n+N-1|\times
     |n+1,n+2,\cdots,n+N-1,n+N|
= 0,\nonumber\\
\end{eqnarray}
\begin{eqnarray}
&& \fl \quad |n-1,n+1,\cdots,n+N-2,n+N+1|\times
     |n,n+1,\cdots,n+N-2,n+N-1| \nonumber\\
&& \fl  - |n,n+1,\cdots,n+N-2,n+N+1|\times
     |n-1,n+1,\cdots,n+N-2,n+N-1| \nonumber\\
&& \fl  + |n+1,\cdots,n+N-1,n+N+1|\times
     |n-1,n,n+1,\cdots,n+N-3,n+N-2|
= 0\,,\nonumber\\
\end{eqnarray}
leads to \begin{eqnarray*}
&& \partial_{x_1} \left(\partial_{x_{-2}}-2b\partial_{x_{-1}} +Nb^2\right)\tau_n\times\tau_n
-(\partial_{x_{-2}}-2b\partial_{x_{-1}} +Nb^2)\tau_n\times\partial_{x_1}\tau_n \\
&& \qquad +(\partial_{x_{-1}}-Nb)\tau_{n+1}\times\tau_{n-1}
-\tau_{n+1}(\partial_{x_{-1}}-Nb)\tau_{n-1}=0\,,
\end{eqnarray*}
i.e.,
$$
(\partial_{x_1} \partial_{x_{-2}} \tau_n) \tau_n-\partial_{x_{-2}} \tau_n \partial_{x_1}\tau_n
-2b((\partial_{x_1} \partial_{x_{-1}} \tau_n)\tau_n -\partial_{x_{-1}} \tau_n\partial_{x_{1}} \tau_n)
+(\partial_{x_{-1}}\tau_{n+1})\tau_{n-1}-\tau_{n+1}(\partial_{x_{-1}}\tau_{n-1})= 0\,,
$$
which is nothing but the second equation (\ref{2dbilinear2}).
On the other hand, the difference of the following two Pl\"ucker determinant identities
\begin{eqnarray}
&& \fl \quad |n-2, n+1,n+2,\cdots,n+N-2,n+N|\times
     |n,n+1,\cdots,n+N-2,n+N-1| \nonumber\\
&& \fl  - |n-2,n+1,n+2,\cdots,n+N-2,n+N-1|\times
     |n,n+1,n+2,\cdots,n+N-2,n+N| \nonumber\\
&& \fl  + |n-2,n,n+1,n+2,\cdots,n+N-2|\times
     |n+1,n+2,\cdots,n+N-2,n+N-1,n+N|
= 0,\nonumber\\
\end{eqnarray}
\begin{eqnarray}
&& \fl \quad |n-1, n,n+2,\cdots,n+N-2,n+N|\times
     |n,n+1,\cdots,n+N-2,n+N-1| \nonumber\\
&& \fl  - |n-1,n,n+2,\cdots,n+N-2,n+N-1|\times
     |n,n+1,n+2,\cdots,n+N-2,n+N| \nonumber\\
&& \fl  + |n-1,n,n+1,n+2,\cdots,n+N-2|\times
     |n,n+2,\cdots,n+N-2,n+N-1,n+N|
= 0\,,\nonumber\\
\end{eqnarray}
implies
$$
\partial_{x_2}(\partial_{x_{-1}}-Nb)\tau_n\times\tau_n
-\partial_{x_2} \tau_n\times(\partial_{x_{-1}}-Nb)\tau_n
+(\partial_{x_1}\tau_{n-1})\tau_{n+1}-\tau_{n-1}(\partial_{x_1 }\tau_{n+1})= 0\,,
$$
i.e.,
$$
(\partial_{x_2} \partial_{x_{-1}} \tau_n) \tau_n
-\partial_{x_2} \tau_n \partial_{x_{-1}} \tau_n
-(\partial_{x_{1}}\tau_{n+1})\tau_{n-1}+\tau_{n+1}(\partial_{x_{1}}\tau_{n-1})= 0\,.
$$
To prove the forth bilinear equation, we need the following four determinant identities
\begin{eqnarray}
&& \fl \quad |n-2,n+1,\cdots,n+N-2,n+N+1|\times
     |n,n+1,\cdots,n+N-2,n+N-1| \nonumber\\
&& \fl  - |n,n+1,\cdots,n+N-2,n+N+1|\times
     |n-2,n+1,\cdots,n+N-2,n+N-1| \nonumber\\
&& \fl  + |n-2,n,n+1,\cdots,n+N-2|\times
     |n+1,\cdots,n+N-2,n+N-1,n+N+1|
= 0,\nonumber\\
\end{eqnarray}
\begin{eqnarray}
&& \fl \quad |n-2,n+1,\cdots,n+N-3,n+N-1,n+N|\times
     |n,n+1,\cdots,n+N-1| \nonumber\\
&& \fl  - |n,n+1,\cdots,n+N-3,n+N-1,n+N|\times
     |n-2,n+1,\cdots,n+N-1| \nonumber\\
&& \fl  + |n-2,n,n+1,\cdots,n+N-3,n+N-1|\times
     |n+1,\cdots,n+N-1,n+N|
= 0,\nonumber\\
\end{eqnarray}
\begin{eqnarray}
&& \fl \quad |n-1,n,n+2,\cdots,n+N-2,n+N+1|\times
     |n,n+1,\cdots,n+N-1| \nonumber\\
&& \fl  - |n,n+1,\cdots,n+N-2,n+N+1|\times
     |n-1,n,n+2,\cdots,n+N-1| \nonumber\\
&& \fl  + |n-1,n,n+1,\cdots,n+N-2|\times
     |n,n+2,\cdots,n+N-1,n+N+1|
= 0,\nonumber\\
\end{eqnarray}
\begin{eqnarray}
&& \fl \quad |n-1,n,n+2,\cdots,n+N-3,n+N-1,n+N|\times
     |n,n+1,\cdots,n+N-1| \nonumber\\
&& \fl  - |n,n+1,\cdots,n+N-3,n+N-1,n+N|\times
     |n-1,n,n+2,\cdots,n+N-1| \nonumber\\
&& \fl  + |n-1,n,n+1,\cdots,n+N-3,n+N-1|\times
     |n,n+2,\cdots,n+N-1,n+N|
= 0.\nonumber\\
\end{eqnarray}
Taking an appropriate linear combination of these four bilinear identities, and substituting into the differential relations for tau functions
(\ref{x1-dif})--(\ref{ys-dif}), we arrive at
\begin{eqnarray*}
&&\fl (\partial_{x_2}(\partial_{x_{-2}}-2b\partial_{x_{-1}} +Nb^2)-2)\tau_n\times\tau_n
-(\partial_{x_{-2}}-2b\partial_{x_{-1}} +Nb^2)\tau_n\times\partial_{x_2}\tau_n
+\partial_{x_1}\tau_{n-1}\times(\partial_{x_{-1}}-Nb)\tau_{n+1} \\
&&\fl -(\partial_{x_1}(\partial_{x_{-1}}-Nb)-1)\tau_{n-1}\times\tau_{n+1}
-\tau_{n-1}(\partial_{x_1}(\partial_{x_{-1}}-Nb)-1)\tau_{n+1}
+(\partial_{x_{-1}}-Nb)\tau_{n-1}\times\partial_{x_1}\tau_{n+1}= 0\,,
\end{eqnarray*}
i.e.,
\begin{eqnarray*}
&&\fl (\partial_{x_2}\partial_{x_{-2}}\tau_n)\tau_n
-\partial_{x_{-2}}\tau_n\partial_{x_2}\tau_n
-2b((\partial_{x_2}\partial_{x_{-1}}\tau_n)\tau_n-\partial_{x_{-1}}\tau_n\partial_{x_2}\tau_n)
-2\tau_n\tau_n \\
&&\fl -(\partial_{x_1}\partial_{x_{-1}}\tau_{n+1})\tau_{n-1}
+\partial_{x_{-1}}\tau_{n+1}\partial_{x_1}\tau_{n-1}
+\partial_{x_1}\tau_{n+1}\partial_{x_{-1}}\tau_{n-1}
-\tau_{n+1}\partial_{x_1}\partial_{x_{-1}}\tau_{n-1}
+2\tau_{n+1}\tau_{n-1}
= 0,
\end{eqnarray*}
which is exactly the fourth equation (\ref{2dbilinear4}). The proof is complete.
\end{proof}
Next, we show the reduction process in order to construct multi-soliton solution to 2-HS equation. We start with a period 2 reduction by
requiring
$$
q_i=-p_i\,,
$$
which leads to
$\partial_{x_{-2}} \tau_n = c \tau_n$. Thus, the bilinear equations (\ref{2dbilinear2}) and (\ref{2dbilinear4})
are simplified into
\begin{equation}
b D_{x_1}D_{x_{-1}}\tau _{n}\cdot \tau _{n}= D_{x_{-1}}\tau _{n+1}\cdot \tau _{n-1},
\label{2dbilinear5}
\end{equation}
and
\begin{equation}
-\left(b D_{x_{2}}D_{x_{-1}}+2\right)
\tau _{n}\cdot \tau _{n}=(D_{x_{1}}D_{x_{-1}}-2)\tau _{n+1}\cdot \tau
_{n-1}\,,%
\label{2dbilinear6}
\end{equation}
respectively. A substitution of Eq. (\ref{2dbilinear1}) into Eq. (\ref{2dbilinear5}) leads to
\begin{equation}
\left( D_{x_{-1}} + 2b \right) \tau _{n+1}\cdot \tau _{n-1} =2 b \tau _{n}\cdot \tau _{n} \,,
\label{2dbilinear7}
\end{equation}
while a substitution of Eq.(\ref{2dbilinear3}) into Eq. (\ref{2dbilinear6}) yields
\begin{equation}
(D_{x_{1}}D_{x_{-1}}+2b D_{x_{1}} -2)\tau _{n+1}\cdot \tau
_{n-1} = -2 \tau _{n}\cdot \tau _{n} \,.%
\label{2dbilinear8}
\end{equation}
Finally, substituting Eq. (\ref{2dbilinear7}) into Eq. (\ref{2dbilinear8}), one obtains
\begin{equation}
(D_{x_{1}}D_{x_{-1}}+b^{-1} D_{x_{-1}} +2b D_{x_{1}})\tau _{n+1}\cdot \tau
_{n-1} = 0 \,.%
\label{2dbilinear9}
\end{equation}
Finally, if we assume $\tau_{0}=g$, $\tau_{1}=f$ and $\tau_{-1}=h$ and let $x_1=y$, $x_{-1}=s$ and $2b=\sqrt{\sigma}$, Eqs. (\ref{2dbilinear1}), (\ref{2dbilinear7}) and
(\ref{2dbilinear9}) are exactly a set of bilinear equations (\ref{CHH1})--(\ref{CHH2}), which derive 2-HS equation.
In summary, we can provide the $N$-soliton solution to 2-HS equation by the following theorem
\begin{theorem}
The 2-HS equation (\ref{2CH-sweqa})--(\ref{2CH-sweqb}) admits the $N$-soliton solution of the parametric form
\begin{equation}\label{multisoliton1}
  u=-(\ln g)_{ss}\,, \quad \rho=\frac{g^2}{fh}\,,
\end{equation}
\begin{equation}
   x=y-(\ln g)_{s}, \quad t=s\,,
\end{equation}
where $g=\tau_{0}$, $f=\tau_{1}$, $h=\tau_{-1}$ with the tau functions $\tau_n$ determined by
\begin{equation}
  \tau_n=\left|\matrix{
\psi_1^{(n)} &\psi_1^{(n+1)} &\cdots &\psi_1^{(n+N-1)} \cr
\psi_2^{(n)} &\psi_2^{(n+1)} &\cdots &\psi_2^{(n+N-1)} \cr
\vdots       &\vdots         &       &\vdots           \cr
\psi_N^{(n)} &\psi_N^{(n+1)} &\cdots &\psi_N^{(n+N-1)}}\right|\,,
\end{equation}
where
\begin{equation}
\psi_i^{(n)}=a_{i,1}\left(p^{-1}_i-\frac{\sqrt{\sigma}}{2}\right)^ne^{\xi_i}+a_{i,2}\left(-p^{-1}_i-\frac{\sqrt{\sigma}}{2}\right)^n e^{\eta_i}\,,
\end{equation} with
\begin{equation}
\xi_i=\frac{1}{p^{-1}_i-{\sqrt{\sigma}}/{2}}y+p^{-1}_i s +\xi_{i0}\,,
\end{equation}
\begin{equation}
\eta_i=-\frac{1}{p^{-1}_i+{\sqrt{\sigma}}/{2}}y-p^{-1}_i s +\eta_{i0}\,.
\end{equation}
Here $a_{i,1}$, $a_{i,2}$, $\xi_{i0}$ and $\eta_{i0}$ are arbitrary constants.
\end{theorem}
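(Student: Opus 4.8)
The plan is to obtain the $N$-soliton solution as a period-$2$ reduction of the tau functions of the extended two-dimensional Toda hierarchy and then to invoke Theorem~2.1, which already converts the bilinear system \eqref{CHH1}--\eqref{CHH3} back into the 2-HS equation \eqref{2CH-sweqa}--\eqref{2CH-sweqb}. All the determinant manipulations needed for the Toda bilinear equations \eqref{2dbilinear1}--\eqref{2dbilinear4} are supplied by the preceding lemma, so there is nothing combinatorial left to prove; the work is to carry out the reduction carefully and to check that the tau functions written in the statement are exactly the ones it produces.

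First I would specialize the parameters in \eqref{tau}--\eqref{taub} by putting $q_i=-p_i$ for every $i$, fixing $b=\sqrt{\sigma}/2$, and setting $x_1=y$, $x_{-1}=s$, $x_2=x_{-2}=0$. With these choices the phases \eqref{taub} collapse to the $\xi_i$, $\eta_i$ displayed in the theorem and $(q_i^{-1}-b)^n=(-p_i^{-1}-\sqrt{\sigma}/2)^n$, so $\tau_n$ acquires exactly the stated form. The one point that genuinely has to be verified is that $q_i=-p_i$ is an admissible reduction, that is, $\partial_{x_{-2}}\tau_n=c\,\tau_n$ for a constant $c$ independent of $n$. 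This is where ``period $2$'' is used: by completing the square in \eqref{dispersion4} its right-hand side equals $p_i^{-2}\,a_{i,1}(p_i^{-1}-b)^ne^{\xi_i}+q_i^{-2}\,a_{i,2}(q_i^{-1}-b)^ne^{\eta_i}$, and since $q_i=-p_i$ forces $q_i^{-2}=p_i^{-2}$ this is simply $p_i^{-2}\psi_i^{(n)}$; hence $\partial_{x_{-2}}$ multiplies the $i$-th row of every Casorati determinant by the constant $p_i^{-2}$, so $\partial_{x_{-2}}\tau_n=\bigl(\sum_i p_i^{-2}\bigr)\tau_n$ for all $n$. In particular all $D_{x_{-2}}$ terms in \eqref{2dbilinear2} and \eqref{2dbilinear4} drop out.

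With the reduction in place the elimination chain indicated after the lemma goes through unchanged: \eqref{2dbilinear2} reduces to \eqref{2dbilinear5} and \eqref{2dbilinear4} to \eqref{2dbilinear6}; inserting \eqref{2dbilinear1} into \eqref{2dbilinear5} gives \eqref{2dbilinear7}; inserting \eqref{2dbilinear3} into \eqref{2dbilinear6} gives \eqref{2dbilinear8}; and eliminating the $\tau_n\cdot\tau_n$ term between \eqref{2dbilinear7} and \eqref{2dbilinear8} gives \eqref{2dbilinear9}. The three surviving equations \eqref{2dbilinear1}, \eqref{2dbilinear7}, \eqref{2dbilinear9} involve only $x_1$- and $x_{-1}$-derivatives, so they remain valid after the restriction $x_2=x_{-2}=0$; with $\tau_0=g$, $\tau_1=f$, $\tau_{-1}=h$, $2b=\sqrt{\sigma}$ and $n=0$ they coincide with \eqref{CHH1}, \eqref{CHH2}, \eqref{CHH3}, respectively. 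Theorem~2.1 then turns these, through \eqref{trf-dependent}--\eqref{trf-hodograph}, into the 2-HS equation with $\kappa=2$, and the scaling invariance \eqref{Invariant} lifts this normalization to arbitrary positive $\kappa$. As a consistency check one verifies that for $N=1$ the determinant collapses to the one-soliton tau functions of Section~2.

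The step I expect to cost the most effort is not any single identity but the bookkeeping of the reduction: one must be sure that $q_i=-p_i$ is compatible with all the flows being used, that the constant $c$ really cancels out of every bilinear relation into which it is fed (so that what emerges is genuinely \eqref{2dbilinear5}--\eqref{2dbilinear9} and not these up to $c$-dependent remainders), and that the various shifted Casorati determinants in \eqref{x1-dif}--\eqref{ys-dif} are correctly aligned with the columns of $\tau_{n\pm1}$ after $g,f,h$ are identified with $\tau_0,\tau_1,\tau_{-1}$. The truly hard part --- the Pl\"ucker-identity computations behind \eqref{2dbilinear1}--\eqref{2dbilinear4} --- has already been dispatched in the preceding lemma, so what remains is assembly and careful substitution.
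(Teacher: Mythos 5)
Your proposal is correct and follows essentially the same route as the paper: the period-2 reduction $q_i=-p_i$ with $2b=\sqrt{\sigma}$, the elimination chain \eqref{2dbilinear5}--\eqref{2dbilinear9} reducing the Toda bilinear system of Lemma 3.1 to \eqref{CHH1}--\eqref{CHH3} for $\tau_{0}=g$, $\tau_{1}=f$, $\tau_{-1}=h$, and then an appeal to Theorem 2.1. You even supply the one detail the paper merely asserts, namely that $q_i=-p_i$ forces $\partial_{x_{-2}}\tau_n=\bigl(\sum_i p_i^{-2}\bigr)\tau_n$ so that all $D_{x_{-2}}$ terms drop out.
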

The proof is obvious based on the previous discussions. \\
To compare with the one- and two-soliton solution, we list the tau functions for one- and two-soliton solutions as follows.
\par {\bf One-soliton}: Let $a_{1,1}=a_{1,2}=1$, based on the above theorem, we have
\begin{equation}
  g=e^{\xi_1}+e^{\eta_1} \propto 1 + e^{\frac{2p^{-1}_1}{p^{-2}_1-\sigma^2/4}y + 2p^{-1}_1 s+\xi_{0}}\,,
\end{equation}
\begin{eqnarray}
 f &=& \left(p^{-1}_1-{\sqrt{\sigma}}/{2}\right) e^{\xi_1}- \left(p^{-1}_1+{\sqrt{\sigma}}/{2}\right) e^{\eta_1}\,, \nonumber\\
   &\propto&   1 -
  \frac{p^{-1}_1-{\sqrt{\sigma}}/{2}}{p^{-1}_1+{\sqrt{\sigma}}/{2}}  e^{\frac{2p^{-1}_1}{p^{-2}_1-\sigma^2/4} y+ 2p^{-1}_1 s+\xi_{0}}\,,
\end{eqnarray}
and
\begin{eqnarray}
h &=& \left(p^{-1}_1-{\sqrt{\sigma}}/{2}\right)^{-1} e^{\xi_1}- \left(p^{-1}_1+{\sqrt{\sigma}}/{2}\right)^{-1} e^{\eta_1}\,, \nonumber\\
   &\propto&   1 -
  \frac{p^{-1}_1+{\sqrt{\sigma}}/{2}}{p^{-1}_1-{\sqrt{\sigma}}/{2}}  e^{\frac{2p^{-1}_1}{p^{-2}_1-\sigma^2/4}y + 2p^{-1}_1 s+\xi_{0}}\,.
\end{eqnarray}
  \par {\bf Two-soliton solution:} Based on the general $N$-soliton solution, the tau functions for two-soliton solution can be expanded as
\begin{eqnarray}
g &=&\left\vert
\begin{array}{cc}
e^{\xi_1}+e^{\eta_1}  & \left(p^{-1}_1-{\sqrt{\sigma}}/{2}\right) e^{\xi_1}- \left(p^{-1}_1+{\sqrt{\sigma}}/{2}\right) e^{\eta_1} \nonumber \\
e^{\xi_2}+e^{\eta_2}  & \left(p^{-1}_2-{\sqrt{\sigma}}/{2}\right) e^{\xi_2}- \left(p^{-1}_2+{\sqrt{\sigma}}/{2}\right) e^{\eta_2} %
\end{array}%
\right\vert  \\
&=& (p^{-1}_1-p^{-1}_2) e^{\eta_1+\eta_2} -(p^{-1}_1+p^{-1}_2)e^{\xi_1+\eta_2} \nonumber \\
&& \quad +(p^{-1}_1+p^{-1}_2)e^{\xi_2+\eta_1} -
(p^{-1}_1-p^{-1}_2) e^{\xi_1+\xi_2}\, \nonumber \\
& \propto & 1 - \frac{p^{-1}_1+p^{-1}_2}{p^{-1}_1-p^{-1}_2} e^{\xi_1-\eta_1}
 + \frac{p^{-1}_1+p^{-1}_2}{p^{-1}_1-p^{-1}_2} e^{\xi_2-\eta_2} -
 e^{\xi_1+\xi_2-\eta_1-\eta_2}\,.
\end{eqnarray}
By defining
\begin{equation}\label{trfd}
  - \frac{p^{-1}_1+p^{-1}_2}{p^{-1}_1-p^{-1}_2} e^{\xi_1-\eta_1} = e^{\xi_1-\eta_1+\xi_{10}}\,, \quad
\frac{p^{-1}_1+p^{-1}_2}{p^{-1}_1-p^{-1}_2} e^{\xi_2-\eta_2} =  e^{\xi_2-\eta_2+\xi_{20}}\,,
\end{equation}
and
\begin{equation}\label{trfe}
 c_i=-  \frac{p^{-1}_i-{\sqrt{\sigma}}/{2}}{p^{-1}_i+{\sqrt{\sigma}}/{2}}\,,
  \quad k_i = \frac{2p^{-1}_i}{p^{-2}_i-\sigma^2/4}\,,
  \quad \omega_i=2p^{-1}_i\,,
\end{equation}
we finally can show that $g$ is exactly the expression (\ref{taug}) obtained in previous section.
Similarly, we have
\begin{eqnarray}
 f &=&\left\vert
\begin{array}{cc}
\left(p^{-1}_1-\frac{\sqrt{\sigma}}{2}\right) e^{\xi_1}-\left(p^{-1}_1+\frac{\sqrt{\sigma}}{2}\right)e^{\eta_1}  & \left(p^{-1}_1-\frac{\sqrt{\sigma}}{2}\right)^2 e^{\xi_1}+ \left(p^{-1}_1+\frac{\sqrt{\sigma}}{2}\right)^2 e^{\eta_1} \nonumber \\
\left(p^{-1}_2-\frac{\sqrt{\sigma}}{2}\right)e^{\xi_2}-\left(p^{-1}_2+\frac{\sqrt{\sigma}}{2}\right)e^{\eta_2}  & \left(p^{-1}_2-\frac{\sqrt{\sigma}}{2}\right)^2 e^{\xi_2}+ \left(p^{-1}_2+\frac{\sqrt{\sigma}}{2}\right)^2e^{\eta_2} %
\end{array}%
\right\vert  \nonumber \\
& \propto & 1 - \frac{p^{-1}_1+p^{-1}_2}{p^{-1}_1-p^{-1}_2} \frac{p^{-1}_1-{\sqrt{\sigma}}/{2}}{p^{-1}_1+{\sqrt{\sigma}}/{2}} e^{\xi_1-\eta_1}
 + \frac{p^{-1}_1+p^{-1}_2}{p^{-1}_1-p^{-1}_2} \frac{p^{-1}_2-{\sqrt{\sigma}}/{2}}{p^{-1}_2+{\sqrt{\sigma}}/{2}} e^{\xi_2-\eta_2} \nonumber \\
 && \quad - \frac{\left(p^{-1}_1-{\sqrt{\sigma}}/{2}\right) \left(p^{-1}_2-{\sqrt{\sigma}}/{2}\right) }{\left(p^{-1}_1+{\sqrt{\sigma}}/{2}\right)
 \left(p^{-1}_2+{\sqrt{\sigma}}/{2}\right)} e^{\xi_1+\xi_2-\eta_1-\eta_2}\,,
\end{eqnarray}
and
\begin{eqnarray}
 h &=&\left\vert
\begin{array}{cc}
\left(p^{-1}_1-\frac{\sqrt{\sigma}}{2}\right)^{-1} e^{\xi_1}-\left(p^{-1}_1+\frac{\sqrt{\sigma}}{2}\right)^{-1}e^{\eta_1}
& e^{\xi_1}+ e^{\eta_1}  \\
\left(p^{-1}_2-\frac{\sqrt{\sigma}}{2}\right)^{-1}e^{\xi_2}-\left(p^{-1}_2+\frac{\sqrt{\sigma}}{2}\right)^{-1}e^{\eta_2}  & e^{\xi_2}+ e^{\eta_2} %
\end{array}%
\right\vert  \nonumber \\
& \propto & 1 - \frac{p^{-1}_1+p^{-1}_2}{p^{-1}_1-p^{-1}_2} \frac{p^{-1}_1+{\sqrt{\sigma}}/{2}}{p^{-1}_1-{\sqrt{\sigma}}/{2}} e^{\xi_1-\eta_1}
 + \frac{p^{-1}_1+p^{-1}_2}{p^{-1}_1-p^{-1}_2} \frac{p^{-1}_2+{\sqrt{\sigma}}/{2}}{p^{-1}_2-{\sqrt{\sigma}}/{2}} e^{\xi_2-\eta_2} \nonumber \\
 && \quad - \frac{\left(p^{-1}_1+{\sqrt{\sigma}}/{2}\right) \left(p^{-1}_2+{\sqrt{\sigma}}/{2}\right) }{\left(p^{-1}_1-{\sqrt{\sigma}}/{2}\right)
 \left(p^{-1}_2-{\sqrt{\sigma}}/{2}\right)} e^{\xi_1+\xi_2-\eta_1-\eta_2}\,,
\end{eqnarray}
which completely recover the two-soliton solution (\ref{tauf})--(\ref{tauh}) through (\ref{trfd})--(\ref{trfe}). Before we end this section, three remarks are given.
\begin{remark}
If we redefine the parameters as follows  $$
c_i=-  \frac{p^{-1}_i-{\sqrt{\sigma}}/{2}}{p^{-1}_i+{\sqrt{\sigma}}/{2}}\,,
  \quad k_i = \frac{2p^{-1}_i}{p^{-2}_i-\sigma^2/4}\,,
  \quad \omega_i=2p^{-1}_i\,,  $$
  then we recover exactly the one- and two-soliton solutions for 2-HS equation found in previous section.
\end{remark}
\begin{remark}
As $\sigma \to 0$, the tau function $g$ goes to the one representing $N$-soliton solution of the Hunter-Saxton equation
\cite{FMO}. This finding is consistent with the fact that the bilinear equations for the 2-HS equation converge to the ones for the Hunter-Saxton equation.
In this limiting case of one-soliton, $c_1=c_2=-1.0$, thus, $f=h=1-e^{k_1y+\omega_1s}$, $g=1+e^{k_1y+\omega_1s}$. Under this case, $\rho$ becomes singular at the peak point and the soliton is actually the cuspon type as mentioned in \cite{FMO}.
\end{remark}
\begin{remark}
 We should point out here that both the 2-CH equation and the 2-HS equation become the CH equation and the HS equation, respectively, in the limiting case of $\sigma \to 0$ not $\rho \to 0$. It is interesting that the quantity $\rho$ still exists in the CH equation and the HS equation, it is merely decoupled with the dependent variable $u$.
\end{remark}
\section{Concluding remarks}
In the present paper, we proposed a set of bilinear equations for the two-component Hunter-Saxton equation through defining appropriate dependent variable transformations and hodograph transformation. Based on this set of bilinear equations, we construct one-, two-soliton solution to the 2-HS equation. It is interesting that the one-soliton solution to the 2-HS equation is either smooth or loop soliton, whereas, the HS equation only admits the cuspon soliton solution and the 2-CH equation admits either smooth or cuspon solution. By a period 2 reduction of an extended two-dimensional Toda-lattice hierarchy, we provide and prove the $N$-soliton solution in Casorati determinant form. The expansion for one- and two- soliton solution agrees with the results obtained via Hirota's perturbation method.

In a series of work by one of the authors, the integrable discretizations for a class of integrable PDEs with hodograph transformation such as the short pulse equation, the CH and its short wave limit, i.e., the HS equation, the reduced Ostrovsky equation and multi-component short pulse equation have been developed \cite{SPE_discrete1,OMF_JPA,FMO,FMO-VE-discrete,FMO-MCSP-discrete}, and some of them have been successfully used as an integrable self-adaptive moving mesh method for the numerical simulations of these PDEs \cite{dCHcom,FMO-PJMI}. It is a further topic for us to explore the integrable discreitzation of the 2-HS equation.
\section*{Acknowledment}
This work is partially supported by the National Natural Science Foundation
of China (Nos.11428102 and 111471004).

\section*{References}


\begin{thebibliography}{99}
\bibitem{HS} Hunter J K  and Saxton R. Dynamics of director fields. \textit{SIAM J. Appl. Math.} 1991; 51:1498--1521.

\bibitem{HunterZheng}  Hunter J K and  Zheng Y X. On a completely integrable nonlinear hyperbolic variational equation. \textit{Physica D} 1994;79:361--386.

\bibitem{Cewen} Cao C. Stationary Harry-Dym’s equation and its relation with geodesics on ellipsoid. \textit{Acta Math. Sin.}  1990;6:35--41.

\bibitem{Alber95} Alber M S, Camassa R, Holm D D and Marsden J E. On the link between umbilic geodesics and soliton solutions of nonlinear PDE’s. \textit{%
Proc. R. Soc.} 1995;450:677--692.
%

\bibitem{Alber99} Alber M S, Camassa R, Fedorov Y N, Holm D D and Marsden J E. On billiard solutions of nonlinear PDE’s. \textit{Phys. Lett. A}  1999;264:171--178.

\bibitem{Alber01} Alber M S, Camassa R, Fedorov Y N, Holm D D and Marsden J E. The Complex Geometry of Weak Piecewise Smooth Solutions of Integrable Nonlinear PDE’s
of ShallowWater and Dym Type. \textit{Commun. Math. Phys.} 2001; 221:197--227.

\bibitem{CH_Original}  Camassa R, Holm D D. An integrable shallow water equation with peaked solitons. \textit{Phys. Rev. Lett.} 1993;{71}: 1661--1664.

\bibitem{CHHyman} Camassa R, Holm D D, Hyman J M. A new integrablevshallow water equation. \textit{Adv. Appl. Mech.}  1994; 31:1--33.

\bibitem{Wunsch09} Wunsch M. On the Hunter Saxton system, \textit{Discrete
Contin. Dyn. Syst.} 2009;12:647--656.

\bibitem{Lenells09} Lenells J, Lechtenfeld O. On the N = 2 supersymmetric Camassa–Holm and Hunter–Saxton equations.
 \textit{J. Math. Phys.} 2009;50:4012704.

\bibitem{Wunsch10} Wunsch M. The generalized Hunter Saxton system. \textit{%
SIAM J. Math. Anal.} 2010;42:1286--1304.

\bibitem{Wunsch11} Wunsch M. Weak geodesic flow on a semi-direct product
and global solutions to the periodic Hunter Saxton system. \textit{Nonlinear Anal.} 2011;74:4951--4960.

\bibitem{WunschWu} Wu, H, Wunsch M. Global existence for the generalized
two-component Hunter-Saxton system, \textit{J. Math. Fluid Mech.} 2012;14:455--469.

\bibitem{MoonLiu2012} Moon, B, Liu, Y. Wave breaking and global existence
for the generalized periodic two-component Hunter-Saxton system, \textit{J.
Diff. Equ.} 2012;253:319--355.

\bibitem{Moon2013} Moon, B. Solitary wave solutions of the generalized
twocomponent Hunter-Saxton system. \textit{Nonlinear Anal.} 2013;89:242--249.

\bibitem{Kohlmann} Kohlmann M. The curvature of semidirect product groups associated
with two-component Hunter–Saxton systems. \textit{J. Phys. A: Math. Gen.} 2011;44:225203.

\bibitem{ChunhaiLi} Li C, Wen S and Chen A. Single peak solitary wave and compacton solutions of the
generalized two-component Hunter–Saxton system. \textit{Nonlinear Dyn} 2015;79:1575–1585.

\bibitem{Pavlov} Pavlov M. The Gurevich-Zybin system. \textit{J. Phys. A:
Math. Gen.} 2005;38:3823--3840.

\bibitem{YoujinLMP} Chen M, Liu S, Zhang Y. A two-component
generalization of the Camassa-Holm equation and its solutions.
\textit{Lett. Math. Phys.} 2006;75:1--15.

\bibitem{AGZJPA} Aratyn H, Gomes J F, Zimerman A H. On negative flows of
the AKNS hierarchy and a class of deformations of a bihamiltonian structure
of hydrodynamic type. \textit{J. Phys. A: Math. Gen.} 2006;39;1099--1114.

\bibitem{ConstantinIvanov} Constantin A, Ivanov R I. On an integrable two component
Camassa–Holm shallow water system. \textit{Phys. Lett. A} 2008;372:7129–7132.

\bibitem{Hirotabook} Hirota R. {The Direct Method in Soliton Theory.} Cambridge:
Cambridge University Press;2004.




\bibitem{FMO}  Feng B F, Maruno K and Ohta Y.
Integrable discretizations for the short-wave model of
the Camassa-Holm equation.
\textit{J. Phys. A: Math. Gen.} 2010;{43}:265202.


\bibitem{OMF_JPA} Y. Ohta, K. Maruno and B.-F. Feng,
An integrable semi-discretization
of the Camassa-Holm equation and its determinant solution,
\textit{J. Phys. A: Math. Gen.} 2008;{41}:355205.

\bibitem{OhtaRIMS1989} Ohta Y. Wronskian solutions of soliton equations,
\textit{RIMS Kokyuroku} 1989;684:1--17 [In Japanese].

\bibitem{SPE_discrete1} Feng B F, Maruno K and Ohta Y. Integrable discretizations of the short pulse equation.
\textit{J. Phys. A: Math. Gen.} 2010;43:085203.

\bibitem{FMO-VE-discrete}
Feng B F, Maruno K and Ohta Y. Integrable semi-discretizations of the
reduced Ostrovsky equation. \textit{J. Phys. A: Math. Gen.} 2015;48:135203.

\bibitem{FMO-MCSP-discrete}
Feng B F, Maruno K and Ohta Y. Integrable semi-discretization of a multi-component short pulse equation, accepted by \textit{J. Math. Phys.}
2015;56:043502.

\bibitem{dCHcom}
Feng B F, Maruno K and Ohta Y. A self-adaptive moving mesh method for the Camassa-Holm equation。
\textit{J. Comput. Appl. Math} 2010;35:229--243.

\bibitem{FMO-PJMI} Feng B F, Maruno K and Ohta Y. Self-adaptive moving
mesh schemes for short pulse type equations and their Lax pairs. \textit{Pacific Journal of Mathematics for Industry} 2014;6:8.
\end{thebibliography}
\end{document}